\documentclass[11pt]{article}
\usepackage{hyperref}
\usepackage{amsmath}
\usepackage{amssymb}
\usepackage{amsthm}
\usepackage{enumitem}
\usepackage{graphicx}
\usepackage{subcaption}
\usepackage{physics}
\usepackage{biblatex}
\usepackage[margin=1in]{geometry}
\usepackage{tikz}
\usepackage{adjustbox}
\usepackage{todonotes}
\usetikzlibrary{quantikz2}
\usetikzlibrary{arrows}

\newtheorem{theorem}{Theorem}[section]

\newtheorem{fact}[theorem]{Fact}

\newtheorem{lemma}[theorem]{Lemma}

\newtheorem{definition}[theorem]{Definition} 

\theoremstyle{remark}
\newtheorem{remark}{Remark}[section]

\numberwithin{equation}{section}

\newcommand\numberthis{\addtocounter{equation}{1}\tag{\theequation}}
\newcommand{\zz}{\mathbb{Z}}
\DeclareMathOperator{\poly}{poly}
\DeclareMathOperator{\pr}{Pr}
\DeclareMathOperator{\ex}{\mathbb{E}}

\newcommand{\ii}{\mathfrak i}
\renewcommand{\var}{\text{Var}}

\usepackage{authblk}
\author[1]{Jin-Yi Cai}
\author[1,2]{Ben Young}
\affil[1]{University of Wisconsin-Madison, USA}
\affil[2]{IT University of Copenhagen, Denmark}

\title{A
Sharp Noise Threshold for Shor’s Quantum Factoring and
Discrete Log Algorithms}
\date{}

\begin{document}
\maketitle

\begin{abstract}
    We study the asymptotic behavior of Shor's quantum factoring and discrete log
algorithms when noise affects the precise controlled rotation gates in their
quantum Fourier transforms. Improving the results of Cai (2024) and Cai and
Young (2025), we identify a sharp and vanishingly small noise threshold. If the noise level lies
below this threshold, then the two algorithms succeed in expected polynomial
time. If the noise level exceeds this threshold, then the algorithms provably fail to solve their 
respective problems in expected polynomial time when the underlying primes belong to a set
of positive density.
\end{abstract}

\section{Introduction}
Shor's quantum algorithms for the factoring and discrete log problems are
arguably the two most famous results in quantum computing.
The \emph{factoring problem} (restrict to two prime factors) is defined as follows:
given an integer $N$ promised
to be the product of unknown primes $P$ and $Q$, find the prime factors.
The \emph{discrete log problem} is defined as follows: given a prime number $P$, 
a nonzero natural number $g < P$ such that $g^0,g^1,g^2,\ldots,g^{P-2}$ generate all nonzero integers 
mod $P$, and an integer $y$ which is nonzero mod $P$,
find the \emph{discrete log value}, which is an integer $0 \leq d \leq P-2$ such that $g^d \equiv y \bmod{P}$.
The assumed hardness of these problems underly the RSA cryptosystem and Diffie-Hellman key 
exchange, respectively, two famous cryptographic protocols. 

The key step in Shor's algorithms for the factoring and discrete log problems is the quantum Fourier transform (QFT).
The quantum circuit implementing the QFT is composed mostly of rotations about the $Z$ axis by angles of exponentially
decreasing magnitude. Specifically, the body of the $n$-qubit
QFT circuit is made up of
controlled-$R_{2^{-k}}$ gates for $k = 1,\ldots,n$, where $n$ is on the order of the problem input size (e.g.~the number
of bits in the integer to be factored) and
\[
    R_{2^{-k}} = \begin{bmatrix} 1 & 0 \\ 0 & e^{2\pi \ii 2^{-k}} \end{bmatrix}
\]
is a rotation by angle $2\pi/2^k$
(throughout, $\ii$ denotes the imaginary unit).
The finest rotation is by angle $2\pi/2^n$, a quantity exponentially small in the input size.
Several previous works have examined these exponentially small rotations and their effect on the performance
of the QFT. Coppersmith \cite{coppersmith} introduced
a \emph{banded} QFT circuit that omits all
controlled-$R_{2^{-k}}$ gates for $k$ larger than a parameter $b$ on the order of $\log(n)$.
Fowler and Hollenberg \cite{fowler_scalability_2004, fowler_erratum_2007} show that small controlled-$R_{2^{-k}}$ gates
are difficult to implement fault-tolerantly and provide further analysis of Coppersmith's model. 
Nam and Bl\"umel \cite{nam_robustness_2014, nam_performance_2015,  nam_structural_2015} reintroduce the small controlled-$R_{2^{-k}}$ gates,
but assume that all rotation gates are subject to errors, including, as studied in this work, Gaussian random perturbations
of the rotation angle. These papers claim, using a mix of numerical simulation for small $n$ and heuristic analysis for general $n$, that
the QFT, and hence Shor's factoring algorithm, remains effective on practical scales -- including the scale of integers used in modern RSA
implementations -- for small but reasonable noise levels and $b$ values. However, a heuristic estimate in \cite[Equation 70]{nam_robustness_2014} suggests that, for a fixed noise level or even noise levels vanishingly small in $n$, the performance of the QFT decays exponentially in $n$. Therefore, without the ability to correct arbitrarily small errors, algorithms based on the QFT will fail on sufficiently large inputs.\footnote{We discuss briefly the effect of Quantum Error Correction at the end of the paper; see \autoref{rem:qec}.}

Moving beyond the heuristic estimates in \cite{nam_robustness_2014, nam_performance_2015,  nam_structural_2015},
the first author \cite{cai} gave the first rigorous proof
under any error model that Shor's factoring algorithm fails when noise is present on sufficiently large inputs.
He used a \emph{relative} error
model similar to the relative
error models in \cite{barenco_approximate,nam_robustness_2014,nam_performance_2015}, but allowing exact rotations for large angles: Keep each controlled-$R_{2^{-k}}$ 
gate exact for all $k < b$; for $k \geq b$, replace each controlled-$R_{2^{-k}}$ 
gate in the QFT by a noisy version
\[
    \widetilde{R}_{2^{-k}} = \begin{bmatrix} 1 & 0 \\ 0 & e^{2\pi \ii 2^{-k}(1+\epsilon r)} \end{bmatrix}
\]
which rotates about $Z$ by a corrupted angle
$2\pi(1+\epsilon r)/2^k$, where $\epsilon$ is a global noise level parameter and $r$ is a Gaussian
random variable $\sim N(0,1)$. Here, we say the noise is ``relative'' because it is scaled by the size of each gate's rotation.
The noise variables $r$ on each gate are independent and
the same random perturbation on a
gate applies to every state in a superposition to which the gate is applied. 
Under this error model, the first author proves that, if 
$b + \log_2(1/\epsilon) < \frac{1}{3} \log_2(n) - c$ for some constant $c > 0$, or
equivalently if
\begin{equation}
    \frac{\epsilon}{2^b} = \Omega\left(\left(\frac{1}{n}\right)^{1/3}\right),
    \label{eq:cai}
\end{equation}
then Shor's factoring algorithm fails to factor
$n$-bit integers $PQ$, for random primes  $P,Q$  of the appropriate size, or taken from an explicitly defined set of primes of positive density. 
Under the same error model, the authors \cite{discrete_log} proved a similar result for
Shor's discrete log algorithm (along with supporting experimental results): if there exists some constant $0 < c < 1/2$ such that
$b + \log_2(1/\epsilon) < \frac{1-c}{2} \log_2(n) - \Theta(1)$ 
for large $n$, or equivalently
\begin{equation}
    \frac{\epsilon}{2^b} = \Omega\left(\left(\frac{1}{n}\right)^{\frac{1-c}{2}}\right),
    \label{eq:cai_young}
\end{equation}
then Shor's discrete log algorithm, for a positive density of
primes $P$ of binary length $n$, fails to find the discrete log modulo $P$ of all but an 
exponentially small fraction of inputs $y \in \zz_P^*$.


\paragraph{Results.}
In this article, we improve our above results to a sharp noise threshold for both the factoring
(\autoref{thm:factor}) and discrete log (\autoref{thm:dl}) problems. 
Let $n$ be the input length (the bitlength of $N$ or $P$ for the factoring or discrete log
problems, respectively), and suppose the controlled rotation gates in the $n$-qubit
QFT are subject to the error model from \cite{cai,discrete_log}. 
We show that, if the noise level is bounded by
\begin{equation} \label{eq:lower_bound}
    \frac{\epsilon}{2^b} = O\left(\left(\frac{\log n}{n}\right)^{1/2}\right),
\end{equation}
then Shor's algorithm solves the factoring and discrete log problems in expected time polynomial in $n$ under this error model (however, see \autoref{rem:qec} below).
On the other hand, if
\begin{equation}
    \lim_{n\to\infty} \left(\frac{\epsilon}{2^b}\right) \Big/  \left(\frac{\log n}{n}\right)^{1/2}
    = \infty,
    \label{eq:our_bound}
\end{equation}
then there is a set $S$ of primes of positive density such that, for primes $P,Q \in S$ or
$P \in S$, respectively,
Shor's algorithm, with probability exponentially close to 1,
fails to solve the factoring and discrete log problems in time polynomial in $n$.
Note that our bound \eqref{eq:our_bound} is tighter than the bound \eqref{eq:cai}
and slightly tighter than \eqref{eq:cai_young}.
Our positive result furthermore shows that \eqref{eq:our_bound} is asymptotically tight.



In both of Shor's algorithms, the QFT creates a state whose
coefficients are sums of points on the unit circle.
In the noise-free case, the points in the coefficients of certain
desired states are concentrated in a segment
of the unit circle. Hence the probability of measuring one
of these desired states is large, and the factors of $N$ or the
discrete log value are obtainable from such a state via
classical postprocessing. However,
noise in the QFT perturbs the points within each coefficient; 
if the noise is above a certain threshold,
the points spread evenly around the unit circle and their sum,
hence the probability of measuring a desired state, is
small. The amount of noise affecting the distribution of points
in each coefficient depends on the distribution of 1 bits
in the binary representations of certain problem parameters;
we apply the bit distribution analysis of \cite{cai,discrete_log}, then
give a sharper analysis of the sum of points on the unit circle, giving our improved superpolynomially small upper bound 
\eqref{eq:our_bound} on the success probability upper bound in 
the high-noise case.
The inverse-polynomial lower bound \eqref{eq:lower_bound} on the success probability 
in the low-noise case for factoring requires a
sharpening of Shor's analysis;
for discrete log, the lower bound further requires an additional
bit distribution argument.

For fixed $b$ and $\epsilon$, a sufficiently large $n$ will eventually satisfy \eqref{eq:our_bound},
hence the algorithms fail on large classes of inputs of size $n$.
We emphasize that our analysis here does not rule out the possibility that
actual quantum computers in the future can efficiently factor integers
and compute discrete logs on a scale of interest to  modern cryptosystems.
Indeed, our results show exactly how much error in the QFT the algorithms can tolerate before
they fail (under the noise model and if that is the only source of noise). However, we also prove that, if the algorithms are to solve these problems
for \emph{arbitrarily large} inputs, their QFTs must use \emph{arbitrarily precise} 
controlled rotations, and at exactly what precision.
Hence, to conclude from the existence of these algorithms that the factoring and discrete log problems
are polynomial-time tractable on a quantum computer,
one must assume that arbitrarily precise controlled rotations are physically realizable. See \autoref{rem:qec} for further
discussion.

\section{Background}
\subsection{The noisy quantum Fourier transform}
\label{sec:dlognoise}
For integer $x > 0$, define $[x] := \{0,1,\ldots,x-1\}$, and for integers 
$0 \leq x < y$, define $[x,y] := \{x,\ldots,y\}$ and $[x,y) := \{x,\ldots,y-1\}$.
We encode integers $0 \leq x < 2^n$ as $n$-qubit quantum states $\ket{x} = \ket{x^{[n-1]}\ldots x^{[1]}
x^{[0]}} \in (\mathbb{C}^2)^{\otimes n}$, where $x^{[j]}$ is the value of the $j$th least significant bit in 
the $n$-bit binary representation of $x$ -- that is, $x =  2^{n-1} x^{[n-1]} + \ldots + 2 x^{[1]} + x^{[0]}$.
The exact $n$-qubit QFT $F_{2^n}$ sends state $\ket{x}$ to 
\begin{equation}
    F_{2^n}\ket{x} = \frac{1}{2^{n/2}} \sum_{v=0}^{2^n-1} \exp(2\pi \ii \frac{xv}{2^n})\ket{v}.
    \label{eq:nqubitfourier}
\end{equation}
This power-of-two QFT $F_{2^n}$ is easily implemented using a quantum circuit composed 
of Hadamard gates and controlled-$R_{2^{-k}}$ gates for $2 \leq k \leq n$, 
where $R_\theta$ is the single-qubit rotation about $Z$ by angle $2\pi \theta$: 
\[
    R_\theta = \begin{bmatrix} 1 & 0 \\ 0 & e^{2\pi \ii\theta} \end{bmatrix}.
\]
See \cite[Section 5.1]{nielsen_chuang} for an explicit description of the circuit 
implementing $F_{2^n}$.
We consider the scenario where there is some $b < n$ such that for every $R_{2^{-k}}$ gate with $k \geq b$ (which are
small rotations), some noise is unavoidable. We model this as a random relative error of 
a small magnitude.
Specifically, we replace each controlled-$R_{2^{-k}}$ gate in the circuit implementing 
$F_{2^n}$ by a controlled-$\widetilde{R_{2^{-k}}}$ gate, where
\[
    \widetilde{R_\theta} = \begin{bmatrix} 1 & 0 \\ 0 & e^{2\pi \ii(1+\epsilon r) \theta} 
    \end{bmatrix}
\]
is a rotation about $Z$ of angle $2\pi(1+\epsilon r) \theta$, where $r$ is an independent Gaussian random variable drawn from $N(0,1)$ and $\epsilon$ is a global noise magnitude parameter.
With the exact rotations $R_{2^{-k}}$, the $F_{2^n}$ circuit (including the qubit order
reversal at the end) effects the transformation
\begin{align*}
    \ket{x} \mapsto \frac{1}{2^{n/2}} 
     &\left(\ket{0} + \exp(2\pi \ii 0.x^{[0]})\ket{1}\right) \otimes \\
                    & \vdots \\
     &\left(\ket{0} + \exp(2\pi \ii 0.x^{[n-2]}\ldots x^{[0]})\ket{1}\right)  \otimes \\
    &\left(\ket{0} + \exp(2\pi \ii 0.x^{[n-1]}x^{[n-2]}\ldots x^{[0]})\ket{1}\right),
    \numberthis\label{eq:equiv_fourier}
\end{align*}
an equivalent expression for the state $F_{2^n} \ket{x}$ in \eqref{eq:nqubitfourier} 
(see \cite[Section 5.1]{nielsen_chuang}; note that, relative to that exposition, we reverse 
the bit indexing of $\ket{x}$ in both \eqref{eq:nqubitfourier} and \eqref{eq:equiv_fourier}).
When each controlled-$R_{2^{-k}}$ gate is replaced by a 
controlled-$\widetilde{R_{2^{-k}}}$ gate for $k \geq b$, the noisy circuit implements a transformation we call
$\widetilde{F_{2^n}}$, where
\begin{align*}
    \widetilde{F_{2^n}} \ket{x} = 
    &\Big(\ket{0} + \exp(2\pi \ii 0.x^{[0]}) \ket{1}\Big)  \otimes\\
    &\quad\vdots \\
    &\left(\ket{0} + \exp\left(2\pi \ii 0.x^{[b-2]}\ldots x^{[0]}
    \right)
    \ket{1} \right)  \otimes\\
    &\left(\ket{0} + \exp\left(2\pi \ii \left(0.x^{[b-1]}\ldots x^{[0]} + \frac{\epsilon}{2^b}
    r_0^{(n-b)} x^{[0]}
    \right)\right)
    \ket{1} \right)  \otimes\\
    &\quad\vdots \\
    &\left(\ket{0} + \exp\left(2\pi \ii \left(0.x^{[n-2]}\ldots x^{[0]} + \frac{\epsilon}{2^b}
    \left[\frac{r_0^{(1)} x^{[n-b-1]}}{2^0} + \ldots + \frac{r_{n-b-1}^{(1)} x^{[0]}}{2^{n-b-1}}\right]
    \right)\right)
    \ket{1} \right) \otimes \\
    &\left(\ket{0} + \exp\left(2\pi \ii \left(0.x^{[n-1]}x^{[n-2]}\ldots x^{[0]} + \frac{\epsilon}{2^b}
    \left[\frac{r_0^{(0)} x^{[n-b]}}{2^0} + \ldots + \frac{r_{n-b}^{(0)} x^{[0]}}{2^{n-b}}\right]
    \right)\right)
    \ket{1} \right).
\end{align*}
All of our results will use the following fact quantifying the corrupting effect of
Gaussian noise on the cosine of an angle.
\begin{fact}[\cite{cai}] \label{fact:ex}
    For Gaussian random variable $X$ centered at zero and angle $\theta$,
    \[
        \ex[\cos(\theta + X)] = \cos(\theta) \exp(-\frac{1}{2} \emph{Var}[X]).
    \]
\end{fact}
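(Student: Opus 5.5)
The plan is to compute $\ex[e^{\ii(\theta+X)}]$ via the characteristic function of a centered Gaussian and then take real parts. Write $\sigma^2 = \var[X]$, so $X \sim N(0,\sigma^2)$. The degenerate case $\sigma = 0$ needs no work: then $X = 0$ almost surely and both sides equal $\cos\theta$. For $\sigma > 0$, since $\cos$ is bounded, the expectation exists, and by linearity of expectation applied to $\cos(\theta+X) = \mathrm{Re}\, e^{\ii\theta}e^{\ii X}$ we obtain
\[
    \ex[\cos(\theta+X)] = \mathrm{Re}\left( e^{\ii\theta}\, \ex[e^{\ii X}] \right).
\]

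The key step is the standard formula $\ex[e^{\ii t X}] = e^{-\sigma^2 t^2/2}$ for a centered Gaussian, which we use at $t=1$. One way to obtain it is to complete the square in
\[
    \frac{1}{\sqrt{2\pi}\sigma}\int_{-\infty}^{\infty} e^{\ii x} e^{-x^2/(2\sigma^2)}\,dx
\]
and shift the contour, justified by Cauchy's theorem and the decay of the integrand. A contour-free alternative is to split $e^{\ii x} = \cos x + \ii \sin x$. Symmetry of the density kills the $\sin$ part. For the even part, let $f(t) = \ex[\cos(tX)]$; differentiating under the integral and integrating by parts gives $f'(t) = -\sigma^2 t f(t)$ with $f(0)=1$, so $f(t) = e^{-\sigma^2 t^2/2}$.

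Substituting, $\ex[e^{\ii X}] = e^{-\sigma^2/2}$ is real. Hence $\ex[\cos(\theta+X)] = \cos(\theta)\,e^{-\frac12\var[X]}$, as claimed. Equivalently, one can expand $\cos(\theta+X) = \cos\theta\cos X - \sin\theta\sin X$ and use $\ex[\sin X]=0$ together with $\ex[\cos X] = e^{-\sigma^2/2}$.

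The only step carrying real content is the Gaussian characteristic function. I would either cite it as standard or use the ODE argument above, where differentiation under the integral sign is justified by dominated convergence with dominating function $|x|e^{-x^2/(2\sigma^2)}$.
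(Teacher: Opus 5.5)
Your proposal is correct and is essentially the paper's argument. The paper expands $\cos(\theta+X)=\cos\theta\cos X-\sin\theta\sin X$ and uses $\ex[\cos X]=e^{-\sigma^2/2}$ and $\ex[\sin X]=0$ (cited from \cite{cai}), which is your real-part-of-the-characteristic-function computation written out; you additionally derive the Gaussian characteristic function yourself and treat the degenerate case $\sigma=0$ explicitly.
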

\begin{proof}
Recall the expectation of the cosine of a Gaussian random variable centered at zero 
\cite{cai}: if $X \sim N(0,\sigma^2)$, then the expectation
\[
    \ex[\cos(X)] = e^{-\sigma^2/2} \text{ and } \ex[\sin(X)] = 0
\]
(with the second statement using that $\sin$ is an odd function). Then
\[
    \ex\left[\cos\left(\theta + X\right)\right]
    = \ex\left[\cos\left(\theta\right)\cos(X) 
    - \sin\left(\theta\right)
    \sin(X)\right] 
    = \cos\left(\theta\right)\exp\left(-\frac{1}{2}\var[X]\right).
    \qedhere
\]
\end{proof}

\subsection{Fouvry primes}
For integer $x$, let ${\cal P}^+(x)$ be the largest prime dividing $x$. Say that prime $P$ is a
\emph{Fouvry prime} if ${\cal P}^+(P-1) > P^{2/3}$.
\begin{theorem}[Fouvry \cite{fouvry}]
    \label{thm:fouvry}
    There exist constants $c > 0$ and $n_0 > 0$ such that for all $x > n_0$,
    the number of Fouvry primes smaller than $x$ is at least
    $c \frac{x}{\log x}$.
\end{theorem}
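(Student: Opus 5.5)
The plan is Fouvry's weighted-counting (Chebyshev-type) argument. I would not try to reprove the deep equidistribution inputs; they would be quoted from the analytic number theory literature.

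\textbf{Setup.} It suffices to work with primes $p \in (x/2, x]$. For such $p$ we have $p^{2/3} \le x^{2/3}$, so any $p$ with ${\cal P}^+(p-1) > x^{2/3}$ is a Fouvry prime. Also, $p-1$ has at most one prime factor exceeding $x^{2/3}$. Consider
\[
S(x) = \sum_{x/2 < p \le x} \log(p-1) = \sum_{x/2<p\le x} \sum_{q^k \mid p-1} \log q ,
\]
where the inner sum is over prime powers $q^k$. By the prime number theorem, $S(x) = (1+o(1))\,\pi^*(x)\log x$, with $\pi^*(x) := \pi(x)-\pi(x/2)$.

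\textbf{Splitting by the size of $q$.} Split the double sum by the size of $q$ into three ranges:
\begin{itemize}
\item the small range $q \le x^{1/2}(\log x)^{-B}$;
\item the middle range $x^{1/2}(\log x)^{-B} < q \le x^{2/3}$;
\item the large range $q > x^{2/3}$.
\end{itemize}
Prime powers with $k\ge 2$ contribute $O(x)$ and are negligible.

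In the small range, the Bombieri--Vinogradov theorem gives $\pi^*(x;q,1) \approx \pi^*(x)/\varphi(q)$ on average over $q$. This range therefore contributes $(1+o(1))\,\pi^*(x)\sum_{q\le x^{1/2}} \log q/(q-1) = (\tfrac12+o(1))\,\pi^*(x)\log x$.

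The large range contributes at most $\log x$ times the number of $p\in(x/2,x]$ with ${\cal P}^+(p-1)>x^{2/3}$. So if the middle range contributes at most $(\tfrac12 - \eta)\,\pi^*(x)\log x$ for some fixed $\eta>0$, then the number of such $p$ is at least $(\eta - o(1))\,\pi^*(x) \gg x/\log x$. That would prove the theorem, with $c$ depending on $\eta$.

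\textbf{The middle range (main obstacle).} This is where the difficulty lies. The Brun--Titchmarsh inequality $\pi(x;q,1) \le 2x/(\varphi(q)\log(x/q))$ is too weak here. Summing $\log q \cdot 2x/(q \log(x/q))$ over $x^{1/2}<q\le x^{2/3}$ gives $2\log 2 \cdot \pi^*(x)\log x$ up to lower-order terms, which exceeds the budget of $\tfrac12$.

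I would first invoke Fouvry's extension of Bombieri--Vinogradov for the fixed residue class $a=1$. This states that primes are well distributed in progressions $1 \bmod q$ on average over $q$ up to $x^{1/2+\delta}$ for some small $\delta>0$. It is proved by the dispersion method and Kloosterman-sum bounds (Deshouillers--Iwaniec), following Bombieri--Friedlander--Iwaniec. With it, the asymptotic from the small range extends up to $q\le x^{1/2+\delta}$ at no extra cost.

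On the remaining range $x^{1/2+\delta}<q\le x^{2/3}$, I would use an averaged Brun--Titchmarsh bound with constant strictly less than $2$, available from the same bilinear techniques (Iwaniec/Fouvry). The goal is a total contribution of at most $(\tfrac16 - \delta - \eta')\,\pi^*(x)\log x$ on this range. With the sharpened inputs the numerics close, leaving a positive margin $\eta$.

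The key difficulty is precisely the level of distribution beyond $x^{1/2}$. This is why the exponent $2/3$ is reachable but not much more. In the paper itself, I would simply cite this theorem as established by Fouvry rather than reproduce the argument.
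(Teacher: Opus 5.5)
The paper gives no proof of this statement; it quotes Fouvry's theorem. Your closing decision to cite it rather than reprove it therefore matches the paper. The sketch you give before that, however, has a step that fails as written: the numerical target in your last paragraph. By your own accounting, the range $q\le x^{1/2}(\log x)^{-B}$ contributes $(\tfrac12+o(1))\pi^*(x)\log x$. So you need the whole middle range to contribute at most $(\tfrac12-\eta)\pi^*(x)\log x$. If $(x^{1/2}(\log x)^{-B},x^{1/2+\delta}]$ costs about $\delta\,\pi^*(x)\log x$, the budget left for $(x^{1/2+\delta},x^{2/3}]$ is $(\tfrac12-\delta-\eta)\pi^*(x)\log x$, not $(\tfrac16-\delta-\eta')\pi^*(x)\log x$. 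The target you wrote cannot be met. Since $\sum_{x^{1/2+\delta}<q\le x^{2/3}}\log q/(q-1)=(\tfrac16-\delta+o(1))\log x$, the expected size of that sum is $(\tfrac16-\delta)\pi^*(x)\log x$. You would be proving that primes are systematically under-represented in the classes $1\bmod q$ for these $q$. That is believed false, and no upper-bound method can prove a bound below the true size.

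Even with the correct budget, ``an averaged Brun--Titchmarsh constant strictly less than $2$'' is not the right condition. Consider an average bound $\pi^*(x;q,1)\le c\,\pi^*(x)\log x/(\varphi(q)\log(x/q))$, where plain Brun--Titchmarsh corresponds to $c=2$. It makes $(x^{1/2},x^{2/3}]$ cost about $c\log\tfrac32\cdot\pi^*(x)\log x$. So you need $c<1/(2\log\tfrac32)\approx 1.23$, or more generally a size-dependent constant whose integral over $[\tfrac12,\tfrac23]$ (in the variable $\log q/\log x$) stays below $\tfrac12$. Supplying such a bound is exactly what Fouvry's averaged Brun--Titchmarsh theorem does.

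Your claim that the asymptotic extends to $q\le x^{1/2+\delta}$ ``at no extra cost'' also overstates the available input. The fixed-residue results beyond $x^{1/2}$ (Bombieri--Friedlander--Iwaniec, Fouvry) need well-factorable weights or moduli with a conveniently sized factor, which prime moduli lack. For prime $q$ they give only sieve upper bounds with constant above $1$. This is harmless, since you only need an upper bound, but that range must then be charged to the same budget.

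Two smaller slips. First, the prime powers with $k\ge 2$ contribute $O(x/\log x)$, via Brun--Titchmarsh for $q^k\le x^{1-\varepsilon}$ and a trivial count above that. A bound of $O(x)$ would not be negligible against $\pi^*(x)\log x\asymp x$. Second, plain Brun--Titchmarsh on $(x^{1/2},x^{2/3}]$ costs about $2\log\tfrac32\approx 0.81$ times $\pi^*(x)\log x$, not $2\log 2$ (which corresponds to going up to $x^{3/4}$). This is still above $\tfrac12$, so your conclusion that it is too weak stands.
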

Following \cite{cai,discrete_log}, we will prove our negative results for Shor's algorithms for Fouvry primes $P$, as the assumption that $P-1$ has a
large prime factor is very useful for our bit distribution arguments.
Since the number of primes at most $x$ is asymptotically $\frac{x}{\log x}$, 
Fouvry's theorem states
that the set of Fouvry primes $P$ has positive density in the set of all primes, and hence
the algorithms fail for a positive density of all primes.

\section{Factoring}
\label{sec:factoring}
\subsection{Shor's quantum factoring algorithm}
\label{sec:factoring_algo}
Shor's quantum factoring algorithm \cite{shor} solves the following problem: 
given $n$-bit integer $N$ promised
to be the product of unknown primes $P$ and $Q$, find $P$ and $Q$.
The algorithm begins by choosing a random $x \bmod{N}$ and preparing the state
\begin{equation}
    \frac{1}{2^{n/2}} \sum_{u=0}^{2^n-1} \ket{u} \ket{x^u \bmod{N}}
\end{equation}
in two $n$-qubit registers. The algorithm aims to find the order $\omega$ of $x$ in the multiplicative 
group $\mathbb{Z}_N^*$ of integers $\bmod{\,N}$ -- that is, $\omega$ is the smallest number in
$\mathbb{Z}_n \setminus \{0\}$ such that $x^\omega \equiv 1 \bmod{N}$.
From $\omega$, the algorithm recovers $P$ and $Q$ via classical postprocessing.
To find $\omega$, the algorithm applies an $n$-qubit QFT $F_{2^n}$ to the first
register. The resulting state is
\begin{align*}
    &\frac{1}{2^n} \sum_{u=0}^{2^n-1} \sum_{v=0}^{2^n-1} \exp\left(\frac{2\pi \ii uv}{2^n}\right)\ket{v}
    \ket{x^u \bmod{N}} \\
    &= \frac{1}{2^n} 
    \sum_{v=0}^{2^n-1} \ket{v} \left(\sum_{u^*=0}^{\omega-1} \left(
    \sum_{\substack{u \in [2^n] \\ u \equiv u^* \bmod{\omega}}} 
    \exp\left(\frac{2\pi \ii uv}{2^n}\right)\right)\ket{x^{u^*} \bmod{N}}\right).
\end{align*}
Finally, the algorithm measures the state in the first register. The probability of measuring
$\ket{v}$ is
\begin{align*}
    \frac{1}{2^{2n}} \sum_{u^*=0}^{\omega-1} 
    \left|\sum_{\substack{u \in [2^n] \\ u \equiv u^* \bmod{\omega}}} 
    \exp\left(\frac{2\pi \ii uv}{2^n}\right)\right|^2
    &= \frac{1}{2^{2n}} \sum_{u^*=0}^{\omega-1} 
    \left|\sum_{\substack{k \geq 0 \\ u^* + k\omega < 2^n}} 
    \exp\left(\frac{2\pi \ii (u^*+k\omega)v}{2^n}\right)\right|^2 \\
    &= \frac{\omega}{2^{2n}} 
    \left|\sum_{k=0}^{K-1} 
    \exp\left(\frac{2\pi \ii u_kv}{2^n}\right)\right|^2,
    \numberthis\label{eq:no_error_factoring}
\end{align*}
where $u_k := u^* + k\omega$ and $K \approx \frac{2^n}{\omega}$ is the largest integer such that $u^* + (K-1)\omega < 2^n$
(note that the norm of the sum in \eqref{eq:no_error_factoring} does not depend on $u^*$).

Let $\{z\}_{2^n}$ be the residue of $z \bmod 2^n$ in the range 
$-2^{n-1} < \{z\}_{2^n} \leq 2^{n-1}$. Shor considers states $\ket{v}$ satisfying
\begin{equation}
    |\{\omega v\}_{2^n}| \leq \frac{\omega}{2}.
    \label{eq:v_bound}
\end{equation}
If $v$ satisfies \eqref{eq:v_bound}, then, in the absence of noise, 
the terms in the sum in \eqref{eq:no_error_factoring} are 
concentrated in one half of the unit
circle, so the probability of measuring a $\ket{v}$ satisfying \eqref{eq:v_bound} is high. Shor
also shows that, with high probability, we recover a factor of $N$ from a $v$ satisfying 
\eqref{eq:v_bound}. In \autoref{thm:factor}, we analyze the probability of measuring some $v$
satisfying \eqref{eq:v_bound} when noise is present in the QFT used by Shor's factoring algorithm.
We show that, if the noise level is sufficiently small, the expectation over the random noise distribution of the probability 
of measuring some $v$ that satisfies \eqref{eq:v_bound} is at 
least $1/\poly(n)$, so the algorithm still runs in expected polynomial time. But if $P$ and
$Q$ are Fouvry primes and the noise
level asymptotically exceeds a certain threshold as a function of $n$, then 
the expectation of this probability is
superpolynomially small.

Note that $v \in [2^n]$ satisfies \eqref{eq:v_bound} if and only if there is a $j \in [\omega]$
such that $|\omega v - 2^nj| \leq \frac{\omega}{2}$, or equivalently
$|v - \frac{2^n}{\omega} j| \leq \frac{1}{2}$. In other words, $v$ satisfies \eqref{eq:v_bound} if
and only if $v = \lfloor \frac{2^n}{\omega} j \rceil$
(the closest integer to $\frac{2^n}{\omega} j$) for some $j \in [\omega]$.
\begin{definition} \label{def:pccj}
    For $j \in [\omega]$ and $c,c' > 0$, say that the predicate
    $\varphi_{c,c'}(j)$ is true if there is a set $T_j \subset \{b,\ldots,n\}$ such that
    \begin{equation}
        \pr_{(k,k') \sim [K] \times [K]} [|V_{j,k,k'}| \geq cn] \geq 1-2^{-c'n},
        \label{eq:prk}
    \end{equation}
    where $(k,k') \sim [K] \times [K]$ means that $(k,k')$
    is sampled uniformly at random from $[K] \times [K]$ and,
    with $v_j = \lfloor \frac{2^n}{\omega} j \rceil$,
    \[
        V_{j,k,k'} := \{i \in T_j \mid (u_k^{[i-b]} - u_{k'}^{[i-b]}) v_j^{[n-i]} = \pm 1\}.
    \]
\end{definition}
In other words, $\varphi_{c,c'}(j)$ is true if there is a high probability over $(k,k')$
that there is a linear (in $n$) size overlap in the 1 bits of the binary expansion of $v_j$
and the symmetric difference of the binary expansions of $u_k$ and $u_{k'}$. Every such
overlapping 1 bit will `activate' a random noise variable $r_0^{(n-i)}$ in the analysis of
the noisy version of \eqref{eq:no_error_factoring}.
The following lemma, proved in \cite{cai} via analysis of the bit distributions of
$u_k$ and $v_j$ for Fouvry primes $P$ and $Q$ 
shows that there exist $c,c' > 0$ such that
it is exponentially unlikely over the choice of $j \in [\omega]$ that $\varphi_{c,c'}$ does not
hold.
\begin{lemma}[{\cite[Lemma 4]{cai}}]
    \label{lem:prj}
    Suppose $N = PQ$ for (odd) Fouvry primes $P,Q$ and $b = O(\log n)$. Then
    there exist constants $c, c', c'' > 0$ such $\pr_{j \sim [\omega]}[\varphi_{c,c'}(j)] \geq 1-2^{-c''n}$, where the probability is for a uniformly sampled $j$ from $[\omega]$.
\end{lemma}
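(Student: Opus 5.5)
The plan is to go through the bit distributions of $\omega$, of the multiples $u_k = u^* + k\omega$, and of the targets $v_j = \lfloor 2^n j/\omega \rceil$, and to use the Fouvry structure to make both sides look like "generic" bit strings.

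\textbf{Step 1: the order $\omega$ is large and $2^n/\omega$ is essentially a random real.} Since $P,Q$ are Fouvry primes, $P-1$ and $Q-1$ have prime factors $p'>P^{2/3}$ and $q'>Q^{2/3}$. For all but a negligible fraction of $x$, both $p'$ and $q'$ divide $\omega$. In particular $\omega \ge N^{2/3-o(1)}$, so $K\approx 2^n/\omega$ is at most $2^{n/3+o(n)}$. Still, $\omega$ has a large prime factor, so $\omega$ is odd and $j\omega^{-1}$ is well distributed modulo powers of two.

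\textbf{Step 2: the bits of $v_j$.} Let $T_j$ be a window of positions $i\in[b+\Theta(\log n), n-\Theta(n)]$, chosen so that the bit index $n-i$ of $v_j$ lies in a range of linear length. The binary expansion of $2^n j/\omega$ is the fractional expansion of $j/\omega$. For uniform $j\in[\omega]$, any block of $m$ consecutive fractional bits of $j/\omega$, taken from a range of positions of length well below $\log_2\omega$, is within total variation distance $2^m/\omega$ of uniform. The rounding to $\lfloor\cdot\rceil$ only affects low bits through carries, which is exponentially unlikely to propagate far. A Chernoff bound on blocks then shows that, except with probability $2^{-\Omega(n)}$ over $j$, the set $A_j=\{i\in T_j: v_j^{[n-i]}=1\}$ has size $\ge c_1 n$ and is spread out: every subinterval of $T_j$ of length $L=\Theta(1)$ contains a 1.

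\textbf{Step 3: the bits of $u_k-u_{k'}$ at positions $i-b$, for fixed good $j$.} This is the main obstacle. We need that, for random $k,k'\in[K]$, the bits $u_k^{[i-b]}$ and $u_{k'}^{[i-b]}$ differ on a linear number of $i\in A_j$. The idea is to view $u_k \bmod 2^{m}$ for the relevant range of bit positions; these positions lie below $\log_2 K$ in part, and above it in part. Since $\omega$ is odd, $k\mapsto k\omega \bmod 2^{m}$ is a bijection. The low bits of $u_k$ (positions below about $\log_2 K - \Theta(n)$) are therefore exactly uniform when $k$ is uniform over a range that is a multiple of $2^m$, and nearly uniform on $[K]$. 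So I would restrict $T_j$ to $i-b < (1/3-\delta)n$, using $K\ge 2^{n/3-o(n)}$ up to constants; this is why one needs $K$ not too small and why the Fouvry bound matters.

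Then $u_k^{[i-b]}$ and $u_{k'}^{[i-b]}$ for $i\in T_j$ form two independent nearly uniform strings. Their XOR is nearly uniform too, with error at most $2^{m}/K\le 2^{-\Omega(n)}$ after choosing $m$ a small linear fraction below $\log_2 K$. Intersecting with $A_j$ of size $\ge c_1 n$, Chernoff gives $|V_{j,k,k'}|\ge c_1n/4=:cn$ except with probability $2^{-c'n}$. Only the bit positions, not $j$, enter here, so the two randomness sources decouple. Finally, union-bound the exceptional $x$, the exceptional $j$ (probability $2^{-c''n}$), and the $(k,k')$ failure.

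The delicate part I expect is reconciling the two windows. Positions $i$ must have $n-i$ in the well-distributed range of $v_j$, and simultaneously $i-b$ in the uniform low range of $u_k$. Since $b=O(\log n)$, taking $i\in[b+1, b+(1/3-\delta)n]$ gives $n-i\ge (2/3)n-O(\log n)$. That is still below $\log_2\omega\ge (2/3-o(1))n$ only barely, so I would shrink to $i\in[b+\delta n, b+(1/3-\delta)n]$, where both conditions hold with linear slack.
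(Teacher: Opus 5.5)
The paper does not prove this lemma; it quotes it from \cite[Lemma 4]{cai}. Your sketch therefore has to stand on its own, and Step 3 has a genuine gap. All the randomness in the bits of $u_k=u^*+k\omega$ comes from $k\in[K]$, so you need a \emph{lower} bound on $K$, i.e.\ an \emph{upper} bound on $\omega$ relative to $2^n$. Step 1 supplies only the opposite: $\omega\ge N^{2/3}$ gives $K\le 2^{n/3+o(n)}$. Step 3 then uses $K\ge 2^{n/3-o(n)}$ without justification. The Fouvry property works against $K$, not for it.

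In the normalization you adopted ($N$ has $n$ bits, $2^n\approx N$), no such lower bound exists. Suppose $x$ generates both $\zz_P^*$ and $\zz_Q^*$ and $\gcd(P-1,Q-1)=2$. Then $\omega=(P-1)(Q-1)/2\approx N/2$, so $K=O(1)$. The diagonal pairs $k=k'$ then have probability $1/K$, and $V_{j,k,k}=\emptyset$, so \eqref{eq:prk} fails for every $j$ and every choice of $T_j$.

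The lemma needs the register to be exponentially larger than $\omega$, as in Shor's choice $N^2\le 2^n<2N^2$. There $\omega<N$ gives $\log_2K\ge n/2-O(1)$, and $\log_2\omega\in[n/3-O(1),\,n/2]$, so your numbers change. Only the top $\log_2\omega-\delta n$ bits of $v_j$ are near-uniform in $j$, which forces $i\le\log_2\omega-\delta n$. The constraint is not a comparison of $n-i$ with $\log_2\omega$, as your last paragraph has it. Meanwhile the $u$-side constraint $i-b\le\log_2K-\delta n$ becomes slack.

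The second error is ``$\omega$ has a large prime factor, so $\omega$ is odd.'' In fact $\omega$ divides $\mathrm{lcm}(P-1,Q-1)$ and is even for at least $3/4$ of the $x$; Shor's post-processing needs it even. If $2^s$ exactly divides $\omega$, every $u_k$ has the same lowest $s$ bits. Those positions then never enter $V_{j,k,k'}$, and $k\mapsto k\omega \bmod 2^m$ is not a bijection.

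The repair uses the Fouvry property a second time:
\[
2^{s}\le\max\bigl(2^{\nu_2(P-1)},2^{\nu_2(Q-1)}\bigr)<\max(P,Q)^{1/3},
\]
so for balanced $P,Q$ in Shor's normalization $s<n/12+O(1)$. Bits $s,\ldots,m-1$ of $u_k$ are within total variation $2^{m-s}/K$ of uniform, since $\omega/2^s$ is odd. You can then take $T_j=\{i: s+b\le i\le \log_2\omega-\delta n\}$, which has linear length.

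With these two fixes your near-uniformity and Chernoff steps for $v_j$ and for $u_k\oplus u_{k'}$ go through. Separately, your claim that every $\Theta(1)$-length subinterval of $T_j$ contains a 1 except with probability $2^{-\Omega(n)}$ is false. Each such window is all-zero with constant probability and there are $\Theta(n)$ of them. You never use this claim, though; $|A_j|\ge c_1n$ suffices.
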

We now prove the main theorem of this section.
\begin{theorem}\label{thm:factor}
    Suppose each controlled-$R_{2^{-k}}$-gate in the quantum
    Fourier transform circuit is replaced by a controlled-$\widetilde{R_{2^{-k}}}$-gate for 
    all $k\geq b$, with $b = O(\log n)$ and noise magnitude parameter $\epsilon$. 
    Let $N = PQ$ have binary length $n$. 
    If 
    \[
        \frac{\epsilon}{2^b} = O\left(\left(\frac{\log n}{n}\right)^{1/2}\right),
    \]
    then, under this error model, 
    the expectation over the random noise of the probability of
    Shor's algorithm measuring a $\ket{v}$ satisfying \eqref{eq:v_bound} is at least $1/\poly(n)$.
    Hence the algorithm finds a prime factor of $N$ in expected time polynomial in $n$.

    On the other hand, if
    \[
        \lim_{n\to\infty} \left(\frac{\epsilon}{2^b}\right) \Big/ \left(\frac{\log n}{n}\right)^{1/2} = \infty,
    \]
    then there is a set $S$ of primes of positive density such that, for primes $P,Q \in S$, the expectation over
    the random noise of the probability of
    Shor's algorithm measuring a $\ket{v}$ satisfying \eqref{eq:v_bound}, which is used to 
    factor $N = PQ$, is exponentially small in $n$.
\end{theorem}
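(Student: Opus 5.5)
We begin by writing the noisy measurement probability explicitly. With $\widetilde{F_{2^n}}$ in place of $F_{2^n}$, the amplitude of $\ket{v}$ in the branch $u^*$ becomes $\sum_k \exp(2\pi \ii (u_k v/2^n + \theta_k(v)))$. Here $\theta_k(v) = \frac{\epsilon}{2^b}\sum_{i} v^{[n-i]} \sum_{\ell} r^{(n-i)}_{\ell} u_k^{[i-b-\ell]}/2^{\ell}$ is a linear combination of the independent Gaussians $r$. Expanding $|\cdot|^2$ as a double sum over $(k,k')$ and taking expectations with \autoref{fact:ex} gives
\[
\ex\Pr[v] = \frac{\omega}{2^{2n}}\sum_{k,k'} \cos\!\left(\frac{2\pi (u_k-u_{k'})v}{2^n}\right)\exp\!\left(-2\pi^2\var[\theta_k-\theta_{k'}]\right),
\]
where $\var[\theta_k-\theta_{k'}] = \frac{\epsilon^2}{2^{2b}}\sum_{i}\sum_\ell (v^{[n-i]})^2 (u_k^{[i-b-\ell]}-u_{k'}^{[i-b-\ell]})^2 4^{-\ell}$. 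Both directions of the theorem will be read off from this formula by bounding the variance term.

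\textbf{Upper bound (high noise).} Drop all terms with $\ell\ge1$, which only decreases the variance. Then $\var[\theta_k-\theta_{k'}] \ge \frac{\epsilon^2}{2^{2b}}|V_{j,k,k'}|$ for $v=v_j$. Take $S$ to be the set of Fouvry primes and apply \autoref{lem:prj}. For $j$ with $\varphi_{c,c'}(j)$ true, the pairs with $|V_{j,k,k'}|\ge cn$ contribute at most $\exp(-2\pi^2 c\, n\epsilon^2/2^{2b})$ each. Under the hypothesis, $n\epsilon^2/2^{2b} = \omega(\log n)$, so this factor is superpolynomially small. The remaining pairs are a $2^{-c'n}$ fraction and contribute at most $2^{-c'n}$ after normalization, since $\omega K^2/2^{2n}\approx 1/\omega$ per $v$. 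Summing over the $\omega$ good $v_j$ gives a superpolynomially small total. The bad $j$ are a $2^{-c''n}$ fraction, and each contributes at most $O(1/\omega)$ probability, so they add only $O(2^{-c''n})$. The bound I obtain is therefore superpolynomially small (exponentially small only if the ratio grows like $n^{\Omega(1)}$), which is the sense in which I would prove the second claim.

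\textbf{Lower bound (low noise).} Here we must show $\ex \Pr[v_j]\ge 1/(\omega\poly(n))$ for a constant fraction of $j$. The plan has three steps.

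First, bound the variance from above for all $k,k'$. Since $\sum_\ell 4^{-\ell}\le 4/3$ and there are at most $n$ indices $i$, we get $\var\le \frac{4}{3}\frac{\epsilon^2}{2^{2b}}\, n \le C\log n$. Hence every damping factor is at least $n^{-2\pi^2 C\cdot 4/3} = 1/\poly(n)$.

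Second, split each damping factor as $e^{-2\pi^2\var} = m_{\min} + (\text{nonnegative excess})$, where $m_{\min}=1/\poly(n)$ is the uniform lower bound from the first step. The $m_{\min}$ part reproduces $m_{\min}$ times the noiseless Shor sum $|\sum_k e^{2\pi \ii u_k v/2^n}|^2 \ge \Omega(K^2)$ for $|\{\omega v\}|\le\omega/2$. The excess part, however, multiplies cosines that can be negative.

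Third, handle those negative cosines; I expect this to be the main obstacle, since it is exactly where Shor's analysis must be sharpened. I would restrict attention to $v$ with $|\{\omega v\}_{2^n}|\le \omega/4$, which is still a constant fraction of the $j$'s once one checks that $\{\omega v_j\}$ is roughly uniform over $j$. For such $v$ the phases $2\pi(u_k-u_{k'})v/2^n$ all lie in $[-\pi/2,\pi/2]$ because $|k-k'|<K$, so every cosine is nonnegative. With nonnegative cosines the excess part can only help, and $\ex\Pr[v] \ge m_{\min}\cdot\Omega(1/\omega)$.

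Summing over the $\Omega(\omega)$ such $j$ gives total probability $1/\poly(n)$. Shor's postprocessing then succeeds with probability $1/\poly(n)$ given such a $v$, so repeating the algorithm a polynomial number of times yields expected polynomial time.
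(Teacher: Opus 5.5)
Your proposal is correct and follows essentially the paper's route. The shared steps are the Gaussian cosine identity and, in the high-noise case, the $\ell=0$ variance lower bound combined with \autoref{lem:prj}. In the low-noise case you restrict to $j$ with $|\eta_j|<1/4$ (your $|\{\omega v_j\}_{2^n}|\le\omega/4$), so every cosine is nonnegative and the uniform damping bound factors out; there the paper uses a Riemann-integral estimate instead of citing Shor's geometric-sum bound. The paper also supplies the counting you deferred: it writes $\omega=2^{n-n'}\omega'$ with $\omega'$ odd, so $j\mapsto j2^{n'}\bmod\omega'$ permutes residues on each block and at least a $1/3$ fraction of $j$ qualify.

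Your caveat on the second claim is well founded. The paper's proof likewise only reaches $o(1/\poly(n))$. Its own bound \eqref{eq:exp_lb} holds at every noise level and shows exponential smallness fails in general: for $\epsilon/2^b=\log n/\sqrt{n}$ it gives $\ex[p]\ge e^{-O(\log^2 n)}$.
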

\begin{proof}
For $j \in [\omega]$, write $v_j = \lfloor \frac{2^n}{\omega} j \rceil = \frac{2^n}{\omega} j + \eta_j$ with $|\eta_j| \leq \frac{1}{2}$, and let $p_j$ be the probability that the measurement
in Shor's factoring algorithm yields $\ket{v_j}$.
Then, by the discussion before \autoref{def:pccj},
\begin{equation} \label{eq:p}
    p := \sum_{j \in [\omega]} p_j
\end{equation}
is the probability that the quantum measurement yields a state $\ket{v}$ satisfying
\eqref{eq:v_bound}, which we view as the algorithm's success probability. We aim to give
lower and upper bounds for $\ex[p]$ (the expectation of $p$ over the random noise), 
respectively, at the two noise levels in the theorem 
statement. Recall that, if Shor's algorithm is allowed to use the noise-free QFT $F_{2^n}$, 
the probability that the quantum measurement step
algorithm yields $\ket{v} = \ket{v_j}$ is given by \eqref{eq:no_error_factoring}. 
The first author \cite[Equation 4]{cai} showed via a calculation similar to that leading to 
\eqref{eq:prob_dl} below that, if the algorithm instead uses the noisy QFT
$\widetilde{F_{2^n}}$, then
\begin{align*}
    p_j = &\frac{\omega}{2^{2n}} \left|
        \sum_{k=0}^{K-1} \exp\left(
        2\pi \ii \left[
            \sum_{t=1}^n \frac{\sum_{s=0}^{n-t} u^{[n-t-s]}_k v_j^{[s]}}{2^t}
        + \frac{\epsilon}{2^b} E_k
        \right] \right) \right|^2 \\
    &= \frac{\omega}{2^{2n}} \left| 
        \sum_{k=0}^{K-1} \exp\left(
            2\pi \ii \left[ \frac{u_k v_j}{2^n}
        + \frac{\epsilon}{2^b} E_k
        \right] \right) \right|^2 \\
    &= \frac{\omega}{2^{2n}} \left|
        \sum_{k=0}^{K-1} \exp\left(
            2\pi \ii \left[ \frac{(u^*+k\omega) (\frac{2^n}{\omega}j + \eta_j)}{2^n}
        + \frac{\epsilon}{2^b} E_k
        \right] \right) \right|^2 \\
    &= \frac{\omega}{2^{2n}} \left| 
    \exp\left(2\pi \ii \frac{u^*(\frac{2^n}{\omega}j + \eta_j)}{2^n}\right)
        \sum_{k=0}^{K-1} \exp\left(
            2\pi \ii \left[ \frac{2^n kj + k\omega \eta_j}{2^n}
        + \frac{\epsilon}{2^b} E_k
        \right] \right) \right|^2 \\
    &= \frac{\omega}{2^{2n}} \left| 
        \sum_{k=0}^{K-1} \exp\left(
            2\pi \ii \left[\frac{A_j}{K} k
        + \frac{\epsilon}{2^b} E_k
        \right] \right) \right|^2
        \numberthis \label{eq:sumnorm}
\end{align*}
where $A_j = K\frac{\omega}{2^n} \eta_j$ and
\[
    E_k := \sum_{i=0}^{n-b} \left(\sum_{\ell=0}^{n-b-i} \frac{u_k^{[n-b-i-\ell]}r^{(i)}_\ell}{2^\ell}\right)v_j^{[i]}.
\]
The situation in \eqref{eq:sumnorm} is similar to \cite[Lemma 3]{jozsa_notes_2003}, but with error modeled by a normal
distribution instead of a bounded deterministic function. 
The sum is composed
of $K$ points evenly spaced between 0 and $2\pi \frac{A_j}{K}(K-1) \approx 2\pi A_j$ radians on the unit circle, 
with the $k$th term perturbed by independent noise $\frac{\epsilon}{2^b}E_k$.

Following \cite{cai}, we now expand the square norm expression in \eqref{eq:sumnorm} and use
the fact that $\exp(\mathfrak i\varphi) + \exp(-\mathfrak i\varphi) = 2\cos(\varphi)$ to obtain
\begin{align*}
    p_j &= \frac{\omega}{2^{2n}} \left(K + 2\sum_{0 \leq \ell < k < K}
        \cos\left( 2\pi \left[ \frac{A_j}{K}(k-\ell)
            + \frac{\epsilon}{2^b} (E_k-E_\ell)
        \right] \right) \right) \\
        &= \frac{K\omega}{2^{2n}} + \frac{2\omega}{2^{2n}}
        \sum_{\ell=0}^{K-2}
        \sum_{k=1}^{K-1-\ell}
        \cos\left( 2\pi \left[\frac{A_j}{K} k
            + \frac{\epsilon}{2^b} (E_{\ell+k}-E_\ell)
        \right] \right) \\
        &= \frac{K\omega}{2^{2n}} + \frac{2\omega}{2^{2n}}\sum_{\ell=0}^{K-2}
        \sum_{k=1}^{K-1-\ell}\cos\left(\frac{2\pi A_j}{K} k
                +  X_{k,\ell}
         \right),
        \numberthis\label{eq:double_sum}
\end{align*}
where $X_{k,\ell} := \frac{2\pi\epsilon}{2^b}(E_{\ell+k}-E_\ell)$. Here, $X_{k,\ell}$ is a
sum of independent Gaussian random variables, so is Gaussian.
Applying \autoref{fact:ex} in \eqref{eq:double_sum}, we find
\begin{align*}
    \ex[p_j] &= \frac{K\omega}{2^{2n}} + \frac{2\omega}{2^{2n}}
    \sum_{\ell=0}^{K-2} \sum_{k=1}^{K-1-\ell}
    \cos\left(\frac{2\pi A_j}{K} k \right)
    \exp\left(-\frac{1}{2}\var[X_{k,\ell}]\right).
    \numberthis\label{eq:pvlb}
\end{align*}

Towards bounding $\var[X_{k,\ell}]$,
we next analyze the variance of the difference of two noise terms $E_{\ell+k},E_{\ell}$.
For the positive claim in the theorem statement,
we provide an upper bound on the variance, and for the negative claim, we provide a lower 
bound that holds with high probability (due to \autoref{lem:prj}) 
over the choice of $0 \le \ell < \ell + k  \in [K]$. For the upper bound, choose any $k,k' \in [K]$.
Each $u_k^{[\ell]}, u_{k'}^{[\ell]}, v^{[i]}_j \in \{0,1\}$ and the random variables $r_\ell^{(i)} \sim N(0,1)$ are independent for distinct pairs $(i,\ell)$, so
\begin{align*}
    \var\left(\frac{2\pi\epsilon}{2^b}(E_k-E_{k'})\right) 
    &= \sum_{i=0}^{n-b} \sum_{\ell=0}^{n-b-i} 
    \var\left(\frac{2\pi\epsilon}{2^b}\frac{(u_k^{[n-b-i-\ell]}-u_{k'}^{[n-b-i-\ell]})v_j^{[i]}}{2^\ell}r^{(i)}_\ell\right) \\
    &\leq (n-b+1) \sum_{\ell=0}^{n-b} \left(\frac{2\pi\epsilon}{2^b}\right)^2 \frac{1}{4^\ell}\\
    &\leq 2(n-b+1) \left(\frac{2\pi\epsilon}{2^b}\right)^2.
    \numberthis \label{eq:variance}
\end{align*}
For the variance lower bound, we assume that there exist $c,c' > 0$ such that the predicate
$\varphi_{c,c'}(j)$ is satisfied (since $P$ and $Q$ are Fouvry primes, \autoref{lem:prj} guarantees
that this is the case with high probability over choice of $j$).
It follows that $|V_{j,k,k'}| \geq cn$ (as in \eqref{eq:prk}) with
high probability over the choice of $k,k'$. 
For $k,k' \in [K]$ satisfying $|V_{j,k,k'}| \geq cn$, we have
\begin{align*}
    \var\left(\frac{2\pi\epsilon}{2^b}(E_k-E_{k'})\right) 
    &= \sum_{i=0}^{n-b} \sum_{\ell=0}^{n-b-i} 
    \var\left(\frac{2\pi\epsilon}{2^b}\frac{(u_k^{[n-b-i-\ell]}-u_{k'}^{[n-b-i-\ell]})v_j^{[i]}}{2^\ell}r^{(i)}_\ell\right) \\
    &\geq \sum_{i=0}^{n-b}
    \var\left(\frac{2\pi\epsilon}{2^b}(u_k^{[n-b-i]}-u_{k'}^{[n-b-i]})v_j^{[i]}r^{(i)}_0\right) \\
    &= \sum_{i=b}^{n}
    \var\left(\frac{2\pi\epsilon}{2^b}(u_k^{[i-b]}-u_{k'}^{[i-b]})v_j^{[n-i]}r^{(n-i)}_0\right) \\
    &= \sum_{i \in V_{j,k,k'}}
    \var\left(\frac{2\pi\epsilon}{2^b}r^{(n-i)}_0\right) \\
    &= |V_{j,k,k'}| \left(\frac{2\pi\epsilon}{2^b}\right)^2 \\
    &\geq cn \left(\frac{2\pi\epsilon}{2^b}\right)^2.
    \numberthis\label{eq:varlb}
\end{align*}

Next, we use the variance upper bound \eqref{eq:variance} to prove the lower bound on
$p$ in the case when 
$\frac{\epsilon}{2^b} = O\left(\left(\frac{\log n}{n}\right)^{1/2}\right)$.
We wish to consider only those $j \in [\omega]$ for
which $\cos\left(\frac{2\pi A_j}{K} k \right) \geq 0$ for every $k = 1,\ldots,K-1-\ell$, as in this case we can
apply the variance upper bound \eqref{eq:variance} to every
term in the sum to obtain a lower bound on \eqref{eq:pvlb}. 
Recall that $A_j = K\frac{\omega}{2^n} \eta_j$, so 
$\frac{2\pi A_j}{K} k = 2\pi\eta_j  \frac{k \omega}{2^n}$, and that $K$ was defined so that
$u^* + (K-1)\omega < 2^n$. The latter implies that
$\frac{k \omega}{2^n} < 1$ for every $k \leq K-1$, so
\[
    |\eta_j| < \frac{1}{4} \implies \cos\left(\frac{2\pi A_j}{K} k \right) \geq 0
    \text{ for every } k = 1,\ldots,K-1-\ell.
\]
Later we will give a lower bound for $\ex[p] = \sum_{j \in [\omega]} \ex[p_j]$.
Each $\ex[p_j]$ is a sum which may have negative terms; however
it only has nonnegative terms if $|\eta_j| < \frac{1}{4}$. Since
every $p_j \geq 0$, we will still obtain
a lower bound on $p$ by dropping
those $p_j$ for which $|\eta_j| \geq \frac{1}{4}$. Then we argue that
$|\eta_j| < \frac{1}{4}$ for
a constant fraction  of $j$, thus dropping those $p_j$ causes no harm.

So suppose that $|\eta_j| < \frac{1}{4}$. Then, applying \eqref{eq:variance} in
\eqref{eq:pvlb}, we obtain
\begin{equation} \label{eq:epj_lb}
    \ex[p_j] \geq  \frac{K\omega}{2^{2n}}+ \frac{2\omega}{2^{2n}}
    \exp\left(-(n-b+1) \left(\frac{2\pi\epsilon}{2^b}\right)^2\right)
    \sum_{\ell=0}^{K-2} \sum_{k=1}^{K-1-\ell}
    \cos\left(\frac{2\pi A_j}{K}k \right).
\end{equation}
Define
\[
    S := \sum_{\ell=0}^{K-2} \sum_{k=1}^{K-1-\ell} \cos\left(\frac{2\pi A_j}{K}k \right)
    = \sum_{k=1}^{K} (K-k) \cos\left(\frac{2\pi A_j}{K}k \right).
\]
First, if $A_j = 0$, then clearly $S = \Theta(K^2)$. Otherwise,
we view $S$ as a right-Riemann sum and approximate it with an integral. In general, the error of such an
approximation for a continuously differentiable function $f(x)$ on an interval $[a,b]$ is
\begin{equation}
    \left|\sum_{k=a+1}^{b} f(k) - \int_{a}^b f(x) dx \right| \leq 
    \left[\max_{a \leq x \leq b}|f'(x)|\right] (b-a).
    \label{eq:riemann}
\end{equation}
Here, with $f(x) := (K-x)\cos\left(\frac{2\pi A_j}{K}x \right)$, $a := 0$, and $b := K$, we have
\begin{equation}
    \left|S - \int_{0}^K (K-x) \cos\left(\frac{2\pi A_j}{K}x \right) dx \right|
    \leq 
    O(K),
    \label{eq:int_approx}
\end{equation}
as $|A_j| = O(1)$ uniformly for all $j$.
Evaluating the integral, we find
\begin{equation}
    \frac{K^2}{2} \ge \int_{0}^K (K-x) \cos\left(\frac{2\pi A_j}{K}x \right) dx 
    \ge  \int_{0}^{K/2} \frac{K}{2} \cos\left(\frac{2\pi A_j}{K}x \right) dx 
    =\Omega(K^2),
\end{equation}
as $|\frac{2\pi A_j}{K}x|
\le \frac{\pi}{4} \frac{K \omega}{2^n}$ for $x \leq K/2$
and so $\cos\left(\frac{2\pi A_j}{K}x \right) = \Omega(1)$ is positive and bounded away from 0.
It follows that $S = \Theta(K^2)$. Therefore \eqref{eq:epj_lb}, along with the fact that
$K \approx \frac{2^n}{\omega}$, give
\begin{align*}
    \ex[p_j] &=
    \Omega\left(\frac{K\omega}{2^{2n}} + \frac{2\omega}{2^{2n}}
    \exp\left(-(n-b+1) \left(\frac{2\pi\epsilon}{2^b}\right)^2\right) K^2\right) \\
             &= \Omega\left(\frac{1}{2^n} + \frac{1}{\omega}
    \exp\left(-(n-b+1) \left(\frac{2\pi\epsilon}{2^b}\right)^2\right)\right). 
    \numberthis\label{eq:pvlb2}
\end{align*}
The conclusion \eqref{eq:pvlb2} holds for those $j \in [\omega]$ satisfying 
$|\eta_j| \leq \frac{1}{4}$.
Recall that $|\eta_j| \leq \frac{1}{4}$ if and only if there is an integer $s$ such that
$|\frac{2^n}{\omega} j - s | \leq \frac{1}{4}$.
Write $\frac{2^n}{\omega} = \frac{2^{n'}}{\omega'}$ for odd 
$\omega' = \frac{\omega}{2^{n-n'}}$. If $\omega' = 1$, then $\omega$ is a power of two and
$\eta_j = 0$ for every $j$, as desired.
Otherwise, $|j  \frac{2^n}{\omega} - s | \leq \frac{1}{4}$ is 
equivalent to $|j  2^{n'} - s \omega' | \leq \omega'/4$, which in turn is equivalent to
$\{j 2^{n'}\}_{\omega'} \in \{-\lfloor \frac{\omega'}{4} \rfloor,\ldots, \lfloor \frac{\omega'}{4}\rfloor\}$.
Since $2^{n'}$ and $\omega'$ are coprime, $\{j 2^{n'}\}_{\omega'}$ takes all values mod
$\omega'$ as $j$ ranges in $[\omega']$. Similarly, for every $t \in [2^{n-n'}]$,
$\{j 2^{n'}\}_{\omega'}$ takes all values mod
$\omega'$ as $j$ ranges in $t \omega',\ldots,(t+1)\omega'-1$.
Hence the proportion of $j \in [\omega] = 
\{0, ..., \omega'-1, \omega', ..., 2\omega' -1, ..., 2^{n-n'}\omega' -1\}$ for which
$\eta_j < \frac{1}{4}$ is close to $\frac{1}{2}$; certainly it is at least $\frac{1}{3}$,
the value when $\omega' = 3$. Thus, applying \eqref{eq:pvlb2},
\begin{align*}
    \ex[p] = \sum_{j \in [\omega]} \ex[p_j] \geq \sum_{\substack{j \in [\omega] \\ |\eta_j| < 1/4}} \ex[p_j]
    &= \Omega\left(\omega \left[\frac{1}{2^n} + \frac{1}{\omega}
    \exp\left(-(n-b+1) \left(\frac{2\pi\epsilon}{2^b}\right)^2\right)\right]\right) \\
    &= \Omega\left(\frac{\omega}{2^n} +
    \exp\left(-(n-b+1) \left(\frac{2\pi\epsilon}{2^b}\right)^2\right)\right). 
    \numberthis\label{eq:exp_lb}
\end{align*}
Finally, under the assumption that
$\frac{\epsilon}{2^b} = O\left(\left(\frac{\log n}{n}\right)^{1/2}\right)$,
there is a constant $\gamma > 0$ such that
\[
    \exp\left(-(n-b+1) \left(\frac{2\pi\epsilon}{2^b}\right)^2\right) \geq \exp(-\gamma\log n) = n^{-\gamma}.
\]
Substituting into \eqref{eq:exp_lb}, we find
\[
    \ex[p] = \Omega(n^{-\gamma}) = 1/\poly(n)
    \quad \text{if} \quad 
    \frac{\epsilon}{2^b} = O\left(\left(\frac{\log n}{n}\right)^{1/2}\right).
\]

Next we turn to the upper bound on $\ex[p]$.
Assume $P$ and $Q$ are Fouvry primes, so, by \autoref{thm:fouvry}, $P$ and $Q$ are drawn from a set
of primes of positive density.
We first apply the variance lower bound \eqref{eq:varlb} to obtain an
upper bound on $\ex[p_j]$ for those $j$ satisfying $\varphi_{c,c'}(j)$.
Let 
\[
    L := \{(k,\ell) \mid 0 \leq \ell \leq K-2, 1 \leq k \leq K-1-\ell\},
\]
and suppose that
$\varphi_{c,c'}(j)$ holds. Then, splitting the double summation in \eqref{eq:pvlb}, upper bounding 
every cosine factor by 1, and applying \eqref{eq:prk} and \eqref{eq:varlb},
\begin{align*}
    \ex[p_j] &\leq
    \frac{K\omega}{2^{2n}} + \frac{2\omega}{2^{2n}} \left(
    \sum_{\substack{(k,\ell) \in L\\|V_{j,\ell+k,\ell}| \geq cn}}
    \exp\left(-\frac{1}{2}\var[X_{k,\ell}]\right)
    + \sum_{\substack{(k,\ell) \in L\\|V_{j,\ell+k,\ell}| < cn}}
    \exp\left(-\frac{1}{2}\var[X_{k,\ell}]\right) \right) \\
    &\leq \frac{K\omega}{2^{2n}} + \frac{2\omega}{2^{2n}} \left(
    \sum_{\substack{(k,\ell) \in L\\|V_{j,\ell+k,\ell}| \geq cn}}
    \exp\left(-\frac{c}{2}\left(\frac{2\pi\epsilon}{2^b}\right)^2 n\right)
    + \big|\{(k,\ell) \in L: |V_{j,\ell+k,\ell}| < cn\}\big| \right) \\
    &\leq \frac{K\omega}{2^{2n}} + \frac{2\omega}{2^{2n}} \left(
    |L| \cdot \exp\left(-\frac{c}{2}\left(\frac{2\pi\epsilon}{2^b}\right)^2 n\right)
    + K^22^{-c'n} \right).
\end{align*}
Using that $|L| \leq K^2$ and that $K \approx \frac{2^n}{\omega}$ then gives
\begin{equation} \label{eq:pvub2}
    \ex[p_j] = O\left(\frac{1}{2^n} + \frac{1}{\omega} \left[
    \exp\left(-\frac{c}{2} \left(\frac{2\pi\epsilon}{2^b}\right)^2 n\right)
    + 2^{-c'n} \right]\right).
\end{equation}
Now, under the assumption that 
$\frac{\epsilon}{2^b} \Big/ \left(\frac{\log n}{n}\right)^{1/2} \to \infty$,
\[
    \exp\left(-\frac{c}{2} \left(\frac{2\pi\epsilon}{2^b}\right)^2 n\right)
    = o(1/\poly(n)).
\]
Applying this to \eqref{eq:pvub2}, we find that, for $j$ such that $\varphi_{c,c'}(j)$ holds,
\[
    \ex[p_j] = O\left(\frac{1}{2^n} + \frac{1}{\omega} 2^{-c'n}\right) +
    o\left(\frac{1}{\omega} \frac{1}{\poly(n)}\right), 
\]
Furthermore, by \eqref{eq:sumnorm}, we have
$\ex[p_j] \leq \frac{\omega}{2^{2n}} K^2 \approx \frac{1}{\omega}$ for any $j$.
By \autoref{lem:prj}, there is a $c''$ such that 
$\pr_{j \sim [\omega]}[\varphi_{c,c'}(j)] \geq 1-2^{-c''n}$. Hence
\begin{align*}
    \ex[p] = \sum_{j \in [\omega]} \ex[p_j]
    &= \sum_{\substack{j \in [\omega] \\ \varphi_{c,c'}(j)}} \ex[p_j]
    + \sum_{\substack{j \in [\omega] \\ \neg \varphi_{c,c'}(j)}} \ex[p_j] \\
    &\leq \omega\left[O\left(\frac{1}{2^n} + \frac{1}{\omega} 2^{-c'n}\right) +
    o\left(\frac{1}{\omega} \frac{1}{\poly(n)}\right)\right]
    + 2^{-c''n}\omega \left[\frac{1}{\omega}\right] \\
    &= O\left(\frac{1}{K} +  2^{-c'n} + 2^{-c''n}\right) +
    o\left(\frac{1}{\poly(n)}\right)\\
    & = o\left(\frac{1}{\poly(n)}\right),
    \quad \text{if} \quad 
    \frac{\epsilon}{2^b} \Big/ \left(\frac{\log n}{n}\right)^{1/2} \to \infty.
\end{align*}
This is superpolynomially small.
\end{proof}

\section{Discrete Log}
\label{sec:discretelog}
\subsection{Shor's quantum discrete log algorithm}
\label{sec:shorsalgo}
In this section, we give an overview of Shor's quantum \cite{shor}
algorithm for the discrete log problem, following
some notations in
\cite{discrete_log}.
Given prime $P$, base $g \in \zz_P^*$ 
(the multiplicative group of integers mod $P$, which is a cyclic group of order
$P-1$), and $y \in \zz_P^*$, the discrete log problem is to find $0 \leq d \leq P-2$ such that $g^d \equiv y \bmod{P}$.
Choose $n$ such that $2^{n-1} \leq P < 2^n$.
Shor's algorithm begins by preparing the state
\begin{equation}
    \frac{1}{P-1} \sum_{u=0}^{P-2} \sum_{k=0}^{P-2} \ket{u} \ket{k} \ket{g^u y^{-k} \bmod P}
    = 
    \frac{1}{P-1} \sum_{u=0}^{P-2} \sum_{k=0}^{P-2} \ket{u} \ket{k} \ket{g^{u-d k} \bmod P}
    \label{eq:initialstate}
\end{equation}
in three $n$-qubit registers. Then the algorithm applies $n$-qubit quantum Fourier transforms 
to the first and second registers. The state becomes
\begin{equation}
    \frac{1}{2^n(P-1)} \sum_{u,k=0}^{P-2} \sum_{v,w=0}^{2^n-1}
    \exp(2\pi \ii \frac{uv + kw}{2^n}) \ket{v}\ket{w} \ket{g^{u-dk} \bmod{P}}.
    \label{eq:afterfourier}
\end{equation}
Now we measure the three registers. Let $0 \leq u^* \leq P-2$.
For each $0 \leq k \leq P-2$,
there is a unique integer $u_k \in [P-1]$, 
satisfying $ u_k - dk \equiv u^* \bmod P-1$.
Then, the probability of obtaining $\ket{v}\ket{w}\ket{g^{u^*}}$
upon measuring the three registers is
\begin{equation}
    \frac{1}{2^{2n}(P-1)^2} \left|\sum_{k=0}^{P-2} \exp(2\pi \ii \frac{u_kv + kw}{2^n}) \right|^2.
    \label{eq:perfectprob}
\end{equation}
Shor \cite{shor} shows that
\begin{equation}
    u_k v + kw \equiv u^*v + kT_v + V_k \bmod{2^n},
    \label{eq:ukvkw}
\end{equation}
where
\[
    T_v = vd + w - \frac{d}{P-1} \{v(P-1)\}_{2^n}
    \quad \text{and} \quad
    V_k = \left(\frac{kd}{P-1} - \left\lfloor \frac{kd+ u^*}{P-1}\right\rfloor\right) \{v(P-1)\}_{2^n}
\]
(while $T_v$ depends on $w$ and $V_k$ depends on $v$, we suppress these dependencies in 
the subscript, as they will not be important). Therefore we may substitute 
$u^*v + kT_v + V_k$ in the RHS of
\eqref{eq:ukvkw} into \eqref{eq:perfectprob} without changing the value.
We can also remove the factor $\exp (2 \pi \ii \frac{u^*v}{2^n})$, of norm one, which is independent  of $k$.
Consider states with $\ket{v}\ket{w}$ in the first two registers with $v$ and $w$ satisfying
\begin{equation}
    \left|\left\{T_v\right\}_{2^n}\right| \leq \frac{1}{2}
    \label{eq:vw1}
\end{equation}
and
\begin{equation}
    \left|\left\{v(P-1)\right\}_{2^n}\right| \leq \frac{2^n}{12}.
    \label{eq:vw2}
\end{equation}
Define
\[
    G = \{(v,w) \mid 0 \leq v,w < 2^n, \text{$v$ and $w$ satisfy \eqref{eq:vw1} and \eqref{eq:vw2}}\}.
\]
Shor \cite{shor} shows that, in the noise-free case, the probability of measuring some $\ket{v}\ket{w}$ satisfying $(v,w) \in G$ is at least a positive constant, and from such a pair one can extract the discrete log value $d$.
In 
\autoref{thm:dl}, we analyze the probability 
of measuring some $(v,w) \in G$ with noise present in the QFT. 
We show that, 
as an expectation over randomness of the noise,
if the noise level is sufficiently small, the probability of measuring some   $(v,w) \in G$ is at least
$\frac{1}{\poly(n)}$, so Shor's algorithm still runs in expected polynomial time. But if $P$
is a Fouvry prime and the noise
level asymptotically exceeds a certain threshold as a function of $n$, then this probability is
superpolynomially small.

Let $\pi_1(G) = \{v \mid (v,w) \in G\}$ be the projection of $G$ onto the first coordinate.
For any $v$, there is exactly one integer $0 \leq w < 2^n$ satisfying \eqref{eq:vw1}
(hence the suppression of $w$ in the subscript of $T_v$), so
$|\pi_1(G)| = |G|$. Shor \cite{shor} showed that $|\pi_1(G)|$
is approximately $2^n/6$, and is always at least $2^n/12$. In summary,
\begin{equation}
    2^n/12 \leq |\pi_1(G)| = |G| \leq 2^n.
    \label{eq:size_g}
\end{equation}

We now give a definition analogous to \autoref{def:pccj} from the analysis for factoring.
\begin{definition}
\label{def:prk_dl}
For $v \in \pi_1(G)$ and $c,c' > 0$, say that the predicate $\varphi_{c,c'}(v)$ is true if
\begin{equation}
    \pr_{(k,k') \sim [P-1] \times [P-1]} \big[|J_{v,k,k'}| \geq cn\big] \geq 1-2^{-c'n},
    \label{eq:prk_dl}
\end{equation}
where, for $v \in \pi_1(G)$,
\[
    J_{v,k,k'} =
    \left\{i \in [n-b+1] ~\big|~ (u_k^{[n-b-i]} - u_{k'}^{[n-b-i]})v^{[i]} = \pm 1\right\}.
\]
\end{definition}
The following lemma is shown
over the course of the proof\footnote{
Specifically, what we call $J_{v,k,k'}$ here  is denoted by $J(k) \Delta J(k')$ in \cite{discrete_log}.
Using the notation of that paper, it is shown before \cite[Equation 15]{discrete_log} that
the proportion of pairs $(k,k')$ for which $|J_{k} \Delta J_{k'}| \geq \delta_2|S_v|$ 
is  $1-O(2^{-c'n})$ for some $c' > 0$,
where $\delta_2 = 1/64$ is a constant and $S_v \subset [n]$.
Furthermore,
for a large subset $G'$ of $G$, 
if $v \in\pi_1(G')$, then $\delta_2 |S_v| = \Omega(n)$, and it is shown in \cite[Equation 11]{discrete_log}
that there is a constant $c'' > 0$ such that $\frac{|\pi_1(G')|}{|\pi_1(G)|} \geq 1-2^{c''n}$.
}
of \cite[Theorem 3.2]{discrete_log}.
\begin{lemma} \label{lem:prk_dl}
Suppose $P$ is an (odd) Fouvry prime and $b = O(\log n)$.
Then, for all but an exponentially small fraction of inputs
$y \in \zz_P^*$, there exist constants $c,c',c'' > 0$ such that $\pr_{v \sim \pi_1(G)}[\varphi_{c,c'}(v)] \geq 1-2^{-c''n}$,
where the probability is for a uniformly sampled $v$ from $\pi_1(G)$.
\end{lemma}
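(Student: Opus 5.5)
This lemma is essentially a repackaging of estimates established inside the proof of \cite[Theorem 3.2]{discrete_log}, so the plan is to extract and recombine them rather than prove anything from scratch.

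\emph{Step 1: bit structure of $u_k$.} Fix an odd Fouvry prime $P$, so $P-1$ has a prime factor $q > P^{2/3}$. For all but an exponentially small fraction of $y$, the discrete log $d$ satisfies $\gcd(d,P-1)$ coprime to $q$; this is a fraction at most $1/q \le 2^{-2(n-1)/3}$. For such $d$, the map $k \mapsto u_k \equiv u^* + dk \pmod{P-1}$ is close to uniform on residues modulo $q$. The goal is the following pairwise independence property: for each fixed window of $\Theta(n)$ bit positions lying well below the top bits, the high-order binary digits of $u_k$ and $u_{k'}$ behave essentially like independent uniform bits, up to an error $2^{-\Omega(n)}$, as $(k,k')$ ranges over $[P-1]^2$. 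This is the same mechanism as in \cite[Lemma 4]{cai}.

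\emph{Step 2: bit structure of $v$.} Next, control the bits of $v \in \pi_1(G)$. Condition \eqref{eq:vw2} pins $v$ near a multiple of $2^n/(P-1)$, that is, $v \approx 2^n j/(P-1)$ for some $j$. Again using the large factor $q$, show that for all but a $2^{-c''n}$ fraction of $v \in \pi_1(G)$ the set $S_v = \{i \le n-b : v^{[i]} = 1\}$ satisfies $|S_v| = \Omega(n)$. Counting arguments on $\pi_1(G)$, together with $|\pi_1(G)| \ge 2^n/12$ from \eqref{eq:size_g}, convert the bound "few $j$ produce $v$ with sparse ones" into a fraction bound. This is the content of \cite[Equation 11]{discrete_log}, restricted to the subset $G'$.

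\emph{Step 3: Chernoff bound over $(k,k')$.} For a good $v$, the quantity $|J_{v,k,k'}|$ counts the indices $i \in S_v$ where the bits $u_k^{[n-b-i]}$ and $u_{k'}^{[n-b-i]}$ differ. By Step 1, these bits behave like independent fair coins. Since $b = O(\log n)$, the shift by $b$ loses only $O(\log n)$ positions. A Chernoff bound then gives $|J_{v,k,k'}| \ge |S_v|/64 \ge cn$ except on a set of pairs $(k,k')$ of proportion $2^{-c'n}$, which is exactly \eqref{eq:prk_dl}.

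\emph{Main obstacle.} The hardest step is Step 1. The bits of $u_k$ are not literally independent, so the Chernoff bound in Step 3 needs an exponential-sum or equidistribution estimate modulo $P-1$ that is uniform over bit windows. I would import this directly from the argument before \cite[Equation 15]{discrete_log} rather than reprove it.
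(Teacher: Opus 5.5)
Your proposal follows the paper's route: the paper proves this lemma only by citing the proof of \cite[Theorem 3.2]{discrete_log}. It combines the estimate before \cite[Equation 15]{discrete_log}, that a $1-O(2^{-c'n})$ fraction of pairs $(k,k')$ have $|J_{v,k,k'}| \geq |S_v|/64$, with the bound $|S_v| = \Omega(n)$ for $v$ in a subset $\pi_1(G')$ of relative size $1-2^{-c''n}$ from \cite[Equation 11]{discrete_log}.

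One inconsistency in your sketch needs fixing. $S_v$ should consist of the $1$ bits of $v$ at positions $i$ for which $n-b-i$ lies in the window where your Step 1 equidistribution of the bits of $u_k$ actually holds, not all $i \leq n-b$: for small $i$ the bit $n-b-i$ of $u_k$ is among the top bits, and there Step 3's ``independent fair coins'' claim fails. With $S_v$ restricted to that window, Step 2 is a plain counting bound using $|\pi_1(G)| \geq 2^n/12$, so it does not need the prime $q$.
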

Now we prove the main theorem of this section, a discrete log analogue of \autoref{thm:factor}.
\begin{theorem}\label{thm:dl}
    Suppose each controlled-$R_{2^{-k}}$-gate in the quantum
    Fourier transform circuit is replaced by a controlled-$\widetilde{R_{2^{-k}}}$-gate for all $k\geq b$,
    with $b = O(\log n)$ and noise magnitude parameter $\epsilon$. 
    Let $P$ be a prime with binary length $n$. If
    \[
        \frac{\epsilon}{2^b} = O\left(\left(\frac{\log n}{n}\right)^{1/2}\right),
    \]
    then, under this error model, for any generator $g$ and any $y \in \zz_P^*$,
    the expectation over the random noise of the probability of
    Shor's algorithm measuring some $(v,w) \in G$ is at least 1/poly$(n)$. Hence the algorithm finds the discrete log value $d$ satisfying
    $g^d = y \bmod{P}$ in expected  polynomial time in $n$.

    On the other hand, if
    \[
        \lim_{n\to\infty} \left(\frac{\epsilon}{2^b}\right) \Big/ \left(\frac{\log n}{n}\right)^{1/2} = \infty,
    \]
    then, for a positive density of primes $P$, for any generator $g \in \zz_P^*$, and all but an 
    exponentially small fraction of inputs
    $y \in \zz_P^*$, the expectation over the random noise of the 
    probability of
    Shor's algorithm  measuring some $(v,w) \in G$, which is used to find the discrete log value $d$ 
    satisfying $g^d = y \bmod{P}$,
    is  exponentially small in $n$.
\end{theorem}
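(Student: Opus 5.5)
We follow the template of the proof of \autoref{thm:factor}. First we write the noisy analogue of \eqref{eq:perfectprob}. Applying $\widetilde{F_{2^n}}$ to both registers, the probability of obtaining $\ket{v}\ket{w}\ket{g^{u^*}}$ is
\[
\frac{1}{2^{2n}(P-1)^2}\Big|\sum_{k=0}^{P-2}\exp\Big(2\pi\ii\Big[\frac{kT_v+V_k}{2^n}+\frac{\epsilon}{2^b}(E_k+E'_k)\Big]\Big)\Big|^2 .
\]
This uses \eqref{eq:ukvkw} to drop the unit factor $\exp(2\pi\ii u^*v/2^n)$. Here $E_k$ is the noise term in $u_k$ and $v$, built as in the factoring proof, and $E'_k$ is the analogous term in $k$ and $w$, driven by the independent noise variables of the second QFT. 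We expand the square and apply \autoref{fact:ex} termwise to get
\[
\ex[p_{v,w}]=\frac{1}{2^{2n}(P-1)}+\frac{2}{2^{2n}(P-1)^2}\sum_{k<k'}\cos\Big(2\pi\frac{(k'-k)T_v+V_{k'}-V_k}{2^n}\Big)e^{-\frac12\var[X_{k,k'}]},
\]
where $X_{k,k'}=\frac{2\pi\epsilon}{2^b}(E_{k'}-E_k+E'_{k'}-E'_k)$. Summing over $u^*$ multiplies by $P-1$.

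\textbf{Upper bound.} Take $v\in\pi_1(G)$ with $\varphi_{c,c'}(v)$ true. The diagonal terms $\ell=0$ of $E_{k'}-E_k$ give the lower bound $\var[X_{k,k'}]\ge |J_{v,k,k'}|(2\pi\epsilon/2^b)^2$, exactly as in \eqref{eq:varlb}. This holds because the $E'$ part involves disjoint independent variables and only adds variance. We bound every cosine by $1$ and split the pairs by whether $|J_{v,k,k'}|\ge cn$. This gives
\[
\ex[p_{v,w}]\le O\big(2^{-2n}\big)+O\big(2^{-2n}\big)\Big[e^{-\frac c2(2\pi\epsilon/2^b)^2 n}+2^{-c'n}\Big]\cdot O(P^2)/P^2\cdot P .
\]
After summing over $u^*$, this is $O(2^{-n})$ times a superpolynomially small quantity under the hypothesis. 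For the remaining $v$ we use the trivial bound $\ex[p_{v,w}]\le 2^{-2n}(P-1)\cdot$const per $u^*$. Since $|G|\le 2^n$ and each $v$ has a unique $w$, \autoref{lem:prk_dl} shows that these $v$ contribute at most $2^{-c''n}$. Summing over $G$ gives $o(1/\poly(n))$.

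\textbf{Lower bound.} The upper bound \eqref{eq:variance} applies to each of $E$ and $E'$, giving $\var[X_{k,k'}]=O(n(\epsilon/2^b)^2)=O(\log n)$, so $e^{-\frac12\var}\ge n^{-\gamma}$. The difficulty, as in the factoring case, is that the cosines need not be nonnegative, so we cannot simply factor out the variance bound. Here the phase is $2\pi[(k'-k)T_v+V_{k'}-V_k]/2^n$. By \eqref{eq:vw1} the first part is at most $\pi|k'-k|/2^n\le\pi/2$. By \eqref{eq:vw2} and $|kd/(P-1)-\lfloor\cdot\rfloor|<1$, the second part is bounded by $2\pi\cdot 2/12=\pi/3$. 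Shor's own analysis uses exactly this to confine the angles to an arc of width less than $\pi$. With $|\cdot|\le\pi/2+\pi/3$ the cosine could still be negative, which is the main obstacle. We first try to restrict to a subset $G''\subseteq G$ on which the angles lie in $[-\pi/2+\delta,\pi/2-\delta]$, for instance $v$ with $|\{v(P-1)\}_{2^n}|\le 2^n/48$ and $|\{T_v\}_{2^n}|\le 1/4$ (\emph{if} a constant fraction of $v$ still admit such $w$). If $T_v$ cannot be tightened, we instead use a Riemann-sum argument as in the factoring proof. We group the $k$ by the value of the fractional sawtooth in $V_k$, which is an additional bit/equidistribution argument on $kd \bmod (P-1)$, and show the positive mass $\Theta(P^2)$ dominates.

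Once all relevant cosines are at least $\cos(\pi/2-\delta)>0$, each is bounded below, so
\[
\ex[p_{v,w}]\cdot(P-1)\ge\Omega\big(2^{-2n}P^2\big)n^{-\gamma}\cdot\frac{1}{P}\cdot P=\Omega\big(2^{-2n}n^{-\gamma}\big)\cdot\Theta(2^{n})\cdot\text{const}.
\]
Summing over the $\Omega(2^n)$ pairs in $G''$ then yields $\Omega(n^{-\gamma})=1/\poly(n)$. This holds for every $g$ and $y$, because no bit-distribution assumption is needed in this direction.
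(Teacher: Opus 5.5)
Your high-noise half matches the paper's argument: the $\ell=0$ terms give $\var[X_{k,\ell}]\ge |J_{v,k,k'}|(2\pi\epsilon/2^b)^2$, you split pairs by $|J_{v,k,k'}|\ge cn$, and the trivial bound absorbs the $2^{-c''n}$ fraction of bad $v$ from \autoref{lem:prk_dl}. That part is fine up to some garbled bookkeeping of powers of $P$.

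The low-noise half has a genuine gap, exactly where you write ``if''. First, your phase estimate is off by a factor of two. From \eqref{eq:vw1} and $|k'-k|\le P-2<2^n$ you only get $2\pi|k'-k|\,|\{T_v\}_{2^n}|/2^n<\pi$, not $\le\pi/2$. So your proposed restriction $|\{T_v\}_{2^n}|\le 1/4$, $|\{v(P-1)\}_{2^n}|\le 2^n/48$ does not force the cosine arguments into $[-\pi/2+\delta,\pi/2-\delta]$. The linear part alone can reach $\frac{\pi}{2}\cdot\frac{P-2}{2^n}$, which is arbitrarily close to $\pi/2$ for $P$ near $2^n$, before the sawtooth contribution is added. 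The paper instead requires $|A_v|\le 1/6$. It combines this with the sharper bound $|B_{\ell+k}-B_\ell|\le 1/12$, which comes from $|X+\lfloor Y\rfloor-\lfloor X+Y\rfloor|\le 1$. The total phase is then at most $2\pi(\frac16+\frac1{12})=\frac{\pi}{2}$.

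Second, and more seriously, you never prove that a constant fraction of $v\in\pi_1(G)$ satisfies any tightened condition on $T_v$. This cannot be arranged through the choice of $w$. For each $v$ the $w$ is unique, and $|\{T_v\}_1|=|\{\frac{d}{P-1}\{v(P-1)\}_{2^n}\}_1|$ is determined by $v$ and the arbitrary $d$.

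The missing idea is the paper's arithmetic reduction. Write $P-1=a2^m$ with $a$ odd. Then $\{v(P-1)\}_{2^n}$ runs uniformly over $j2^m$ with $|j|\le a'=\lfloor 2^{n-m}/12\rfloor$. So one must show $|\{dj\}_a|\le a/6$ for a constant fraction of $j\in[0,a']$, for every $d$, even though $j$ covers only a small fraction of the residues mod $a$. The paper does this by a case analysis. For $d<a/3$ it counts visits of $0,d,2d,\dots$ to the middle third of the residues. For $d$ near $a/3$ or $a/2$ it passes to every third or every second term. This gives a fraction of at least $1/36$.

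Your fallback of grouping $k$ by the sawtooth value and arguing that positive mass dominates cannot replace this. Once cosines of both signs occur, they are weighted by damping factors $e^{-\frac12\var[X_{k,\ell}]}$ that are known only to lie between $n^{-O(1)}$ and $1$. No statement about the phases alone then yields a lower bound, which is exactly why every cosine must be made nonnegative. So the lower bound does hold for all $g,y$, but it is not free: it needs this extra equidistribution argument, and your proposal does not supply it.
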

\begin{proof}
Henceforth, we assume $(v,w) \in G$.
Define
\begin{align*}
    p = \sum_{\substack{(v,w) \in G \\ u^* \in [P-1]}} p(v,w,g^{u^*})
    \numberthis\label{eq:define-expected-prob-p}
\end{align*}
to be probability that the measured quantum state
$\ket{v}\ket{w}\ket{u^*}$ satisfies $(v,w) \in G$, which we view as the algorithm's success 
probability. We aim to give
lower and upper bounds for $\ex[p]$ (the expectation of $p$ over the random noise), 
respectively, at the two noise levels in the theorem 
statement. In the noise-free case, we applied $F_{2^n}$ to the first two registers of the state in \eqref{eq:initialstate},
producing the state in \eqref{eq:afterfourier}. Here, we instead apply
$\widetilde{F_{2^n}}$ to both registers. Call the independent random variables associated with these two noisy 
QFTs $r^{(\cdot)}_\cdot$ and $\rho^{(\cdot)}_\cdot$, respectively.
We then measure the three registers. It is shown in \cite[Equation 8]{discrete_log} that
the probability of measuring $\ket{v}\ket{w}\ket{g^{u^*}}$ after the noisy transform is
\begin{align*}
    & p(v,w,g^{u^*}) \\
    &= \frac{1}{2^{2n} (P-1)^2} \Bigg| \sum_{k=0}^{P-2} \exp \Bigg(
        2 \pi \ii \Bigg[
    \sum_{t=0}^{n-1} v^{[t]} \left(0.u^{[n-t-1]}_k\ldots u_k^{[0]}\right)
        + \sum_{\tau=0}^{n-1} w^{[\tau]} \left(0.k^{[n-\tau-1]} \ldots k^{[0]}\right)
    + \frac{\epsilon}{2^b} E_k \Bigg] \Bigg) \Bigg|^2 \\
    &= \frac{1}{2^{2n} (P-1)^2} \Bigg| \sum_{k=0}^{P-2} \exp \Bigg(
        2 \pi \ii \Bigg[ \sum_{t=0}^{n-1} v^{[t]} u_k 2^{t-n}
        + \sum_{\tau=0}^{n-1} w^{[\tau]} k 2^{\tau-n}
    + \frac{\epsilon}{2^b} E_k \Bigg] \Bigg) \Bigg|^2 \\
    &= \frac{1}{2^{2n} (P-1)^2} \Bigg| \sum_{k=0}^{P-2} \exp \Bigg(
        2 \pi \ii \Bigg[ \frac{1}{2^n} u_k\sum_{t=0}^{n-1} v^{[t]} 2^{t}
        + \frac{1}{2^n} k \sum_{\tau=0}^{n-1} w^{[\tau]} 2^{\tau}
    + \frac{\epsilon}{2^b} E_k \Bigg] \Bigg) \Bigg|^2 \\
    &= \frac{1}{2^{2n} (P-1)^2} \Bigg| \sum_{k=0}^{P-2} \exp \Bigg(
        2 \pi \ii \Bigg[ \frac{u_k v + kw}{2^n} + \frac{\epsilon}{2^b} E_k \Bigg] \Bigg) \Bigg|^2,
        \numberthis\label{eq:prob_dl}
\end{align*}
with error terms
\[
    E_k := \sum_{i=0}^{n-b} \left(\sum_{\ell=0}^{n-b-i} \frac{u_k^{[n-b-i-\ell]} r^{(i)}_\ell}{2^\ell}\right)v^{[i]}
    + \sum_{i=0}^{n-b} \left(\sum_{\ell=0}^{n-b-i} \frac{k^{[n-b-i-\ell]} \rho^{(i)}_\ell}{2^\ell}\right)w^{[i]}.
\]
Now, substituting $u^*v + kT_v + V_k$ from \eqref{eq:ukvkw} for
$u_k v + kw$ in \eqref{eq:prob_dl}, then defining 
$A_v := \frac{P-1}{2^n} \{T_v\}_{2^n}$
and $B_k := \frac{V_k}{2^n}$, we have
\begin{align*}
    p(v,w,g^{u^*})
    &= \frac{1}{2^{2n} (P-1)^2} \Bigg| \sum_{k=0}^{P-2} \exp \Bigg(
        2 \pi \ii \Bigg[ \frac{u^*v + kT_v + V_k}{2^n} + \frac{\epsilon}{2^b} E_k \Bigg] \Bigg) \Bigg|^2 \\
    &= \frac{1}{2^{2n} (P-1)^2} \Bigg| \sum_{k=0}^{P-2} \exp \Bigg(
    2 \pi \ii \Bigg[ \frac{A_v}{P-1}k + B_k + \frac{\epsilon}{2^b} E_k \Bigg] \Bigg) \Bigg|^2.
    \numberthis\label{eq:sumnorm_dl}
\end{align*}
The situation in \eqref{eq:sumnorm_dl} is analogous to \eqref{eq:sumnorm}. 
Expanding the square norm expression in \eqref{eq:sumnorm_dl} gives
\begin{align*}
    &p(v,w,g^{u^*})  \\
    &= \frac{1}{2^{2n}(P-1)^2} \left(P-1 + 2\sum_{0 \leq \ell < k < P-1}
        \cos\left( 2\pi \left[ \frac{A_v}{P-1}(k-\ell)
            + B_k - B_\ell + \frac{\epsilon}{2^b} (E_k-E_\ell)
        \right] \right) \right) \\
       &= \frac{1}{2^{2n}(P-1)} + \frac{2}{2^{2n}(P-1)^2}
        \sum_{\ell=0}^{P-3}
        \sum_{k=1}^{P-2-\ell}
        \cos\left( 2\pi \left[\frac{A_v}{P-1} k
        + B_{\ell+k} - B_\ell + \frac{\epsilon}{2^b} (E_{\ell+k}-E_\ell)
        \right] \right) \\
       &= \frac{1}{2^{2n}(P-1)} + \frac{2}{2^{2n}(P-1)^2}
        \sum_{\ell=0}^{P-3}
        \sum_{k=1}^{P-2-\ell}
        \cos\left( 2\pi \left[\frac{A_v}{P-1} k
        + B_{\ell+k} - B_\ell \right] + X_{k,\ell}
        \right)
        \numberthis\label{eq:double_sum_dl}
\end{align*}
where 
\[
    X_{k,\ell} := \frac{2\pi\epsilon}{2^b} (E_{\ell+k}-E_\ell).
\]
Applying \autoref{fact:ex} in \eqref{eq:double_sum_dl}, we find
\begin{align*}
    &\ex[p(v,w,g^{u^*})] \\
    &= \frac{1}{2^{2n}(P-1)} + \frac{2}{2^{2n}(P-1)^2}
    \sum_{\ell=0}^{P-3} \sum_{k=1}^{P-2-\ell}
    \cos\left(2\pi \left[\frac{A_v}{P-1} k + B_{\ell+k} - B_\ell \right]\right)
    \exp\left(-\frac{1}{2}\var[X_{k,\ell}]\right).
    \numberthis\label{eq:pvlb_dl}
\end{align*}
Now we bound $\var[X_{k,\ell}]$.
For any $k,k' \in [P-1]$, since
each $u_k^{[\ell]}, u_{k'}^{[\ell]}, k^{[\ell]}, k'^{[\ell]}, v^{[i]}_j, w^{[i]}_j \in \{0,1\}$ and $r_\ell^{(i)}, \rho_\ell^{(i)} 
\sim N(0,1)$,
\begin{align*}
    &\var\left(\frac{2\pi\epsilon}{2^b}(E_k-E_{k'})\right) \\
    &= \sum_{i=0}^{n-b} \sum_{\ell=0}^{n-b-i} \left[
    \var\left(\frac{2\pi\epsilon}{2^b}\frac{(u_k^{[n-b-i-\ell]}-u_{k'}^{[n-b-i-\ell]})v_j^{[i]}}{2^\ell}r^{(i)}_\ell\right)
+ \var\left(\frac{2\pi\epsilon}{2^b}\frac{(k^{[n-b-i-\ell]}-k'^{[n-b-i-\ell]})w_j^{[i]}}{2^\ell}\rho^{(i)}_\ell\right) \right]\\
    &\leq 2(n-b+1) \sum_{\ell=0}^{n-b-i} \left(\frac{2\pi\epsilon}{2^b}\right)^2 \frac{1}{4^\ell}\\
    &\leq 4(n-b+1) \left(\frac{2\pi\epsilon}{2^b}\right)^2.
    \numberthis \label{eq:variance_dl}
\end{align*}
On the other hand, if $v,k,k'$ satisfy $|J_{v,k,k'}| \geq cn$ (as in \eqref{eq:prk_dl}), then
\begin{align*}
    &\var\left(\frac{2\pi\epsilon}{2^b}(E_k-E_{k'})\right) \\
    &\geq \sum_{i=0}^{n-b} \left[
    \var\left(\frac{2\pi\epsilon}{2^b}(u_k^{[n-b-i]}-u_{k'}^{[n-b-i]})v_j^{[i]}r^{(i)}_0\right)
+ \var\left(\frac{2\pi\epsilon}{2^b}(k^{[n-b-i]}-k'^{[n-b-i]})w_j^{[i]}\rho^{(i)}_0\right) \right]\\
&\geq \sum_{i \in J_{v,k,k'}} 
    \var\left(\frac{2\pi\epsilon}{2^b}r^{(i)}_0\right)\\
    &= \big|J_{v,k,k'}\big|
    \left(\frac{2\pi\epsilon}{2^b}\right)^2 \\
    &\geq  cn \left(\frac{2\pi\epsilon}{2^b}\right)^2.
    \numberthis\label{eq:varlb_dl}
\end{align*}

First, we use the variance upper bound \eqref{eq:variance_dl} to prove the lower bound on the 
quantity $p$ from \eqref{eq:define-expected-prob-p} in the case when 
$\frac{\epsilon}{2^b} = O\left(\left(\frac{\log n}{n}\right)^{1/2}\right)$.
As in the factoring analysis, we will consider only those pairs $(v,w) \in G$ for which
the argument of every cosine in \eqref{eq:pvlb_dl} is nonnegative.
Towards this, we will show that a constant fraction of $v \in \pi_1(G)$ 
(equivalently, by the discussion before \eqref{eq:size_g}, a constant fraction of
$(v,w) \in G$) satisfy
\begin{equation} \label{eq:triangle}
    \left|A_v \right| \leq \frac{1}{6} \text{ ~~~and~~~ }
    \left| B_{\ell+k} - B_\ell\right| \leq \frac{1}{12}
\end{equation}
for every $\ell,k$. Together, the two equations in \eqref{eq:triangle} imply that every 
cosine in \eqref{eq:pvlb_dl} is nonnegative as, for $k \leq P-2$, we have 
\[
    2\pi \left|\frac{A_v}{P-1} k + B_{\ell+k} - B_\ell\right| \leq 
    2\pi\left(\left|\frac{A_v}{P-1} k\right| + \left|B_{\ell+k} - B_\ell\right|\right)
    \leq 2 \pi \left(\frac{1}{6} + \frac{1}{12}\right) = \frac{\pi}{2}.
\]
First, regardless of $\ell$ and $k$,
\begin{align*}
    |B_{\ell+k} - B_{\ell}| 
    &= \frac{1}{2^n} |V_{\ell+k} - V_{\ell}| \\
    &= \frac{1}{2^n}\big|\{v(P-1)\}_{2^n}\big| \cdot \left|\frac{(\ell+k)d}{P-1} 
    - \left\lfloor \frac{(\ell+k)d+ u^*}{P-1}\right\rfloor - \frac{\ell d}{P-1} 
    + \left\lfloor \frac{\ell d+ u^*}{P-1}\right\rfloor \right| \\
    &\leq \frac{1}{12} \left|\frac{kd}{P-1} 
    + \left\lfloor \frac{\ell d+ u^*}{P-1}\right\rfloor
    - \left\lfloor \frac{kd + \ell d+ u^*}{P-1}\right\rfloor\right| \\
    & =: \frac{1}{12} |X + \lfloor Y \rfloor - \lfloor X+Y\rfloor|.
\end{align*}
We have 
\[
    -1 = X + (Y-1) - (X+Y) \leq X + \lfloor Y \rfloor - \lfloor X+Y\rfloor 
    \leq X + Y - \lfloor X+Y\rfloor \leq 1,
\]
so $|X + \lfloor Y \rfloor - \lfloor X+Y\rfloor| \leq 1$, and hence 
$|B_{\ell+k} - B_{\ell}| \leq \frac{1}{12}$.

Now to show \eqref{eq:triangle} it suffices to show that $|A_v| \leq \frac{1}{6}$
for a constant fraction of $v \in \pi_1(G)$.
Recall that $A_v = \frac{P-1}{2^n} \{T_v\}_{2^n}$, so $|A_v| \leq |\{T_v\}_{2^n}|$.
Since $(v,w) \in G$, we have $|\{T_v\}_{2^n}| \leq \frac{1}{2}$. Therefore
\begin{equation} \label{eq:a_over_p_bound}
    \left|A_v\right| \leq |\{T_v\}_1|
    = \left|\left\{vd + w + \frac{d}{P-1} \{v(P-1)\}_{2^n}\right\}_{1}\right|
    = \left|\left\{\frac{d}{P-1} \{v(P-1)\}_{2^n}\right\}_{1}\right|.
\end{equation}
Write $P-1 = a 2^m$ for odd $a$.
As $v$ ranges in $[2^n]$, $v(P-1) \bmod 2^n$ takes each value 
\[
    0,1\cdot 2^m,2\cdot 2^m, 3\cdot 2^m, \ldots, (2^{n-m} - 1) 2^m = 2^n - 2^m
\]
exactly $2^m$ times. When we iterate over only $v \in \pi_1(G)$,
which by \eqref{eq:vw2} (and the fact that every $v \in [2^n]$ satisfies \eqref{eq:vw1}
for some $w$) is equivalent to $|\{v(P-1)\}_{2^n}| \leq \frac{2^n}{12}$,
$\{v(P-1)\}_{2^n}$ takes every value in $\{j \cdot 2^m \mid j \in [-a',a']\}$ 
exactly $2^m$ times, where $a' = \lfloor \frac{2^{n-m}}{12} \rfloor$. By this fact and
\eqref{eq:a_over_p_bound}, to obtain the desired constant fraction of $v \in \pi_1(G)$ 
satisfying $|A_v| \leq \frac{1}{6}$, it suffices to show that a constant fraction 
of $j \in [-a',a']$ satisfy $|\{\frac{d}{P-1} j \cdot 2^m\}_1| \leq \frac{1}{6}$. Since
$P-1 = a2^m$ by definition, and the condition is symmetric about 0, this is equivalent to
\begin{equation} \label{eq:dja}
    |\{dj\}_a| \leq \frac{a}{6}
\end{equation}
for a constant fraction of $j \in [0,a']$.
We may reduce $d \bmod{a}$ to
assume that $d < a$ and further, by symmetry about 0, assume that $d \leq \frac{a}{2}$.

First suppose that $d < \frac{a}{3}$. 
Divide the full range of values mod $a$ into the following three intervals (of integer points)
of  approximate size $\frac{a}{3}$ (recall that $a$ is odd):
$\left[-\lfloor \frac{a}{2}\rfloor,-\lceil\frac{a}{6}\rceil\right]$,
$\left[-\lfloor \frac{a}{6}\rfloor,\lfloor\frac{a}{6}\rfloor\right]$,
$\left[\lceil \frac{a}{6}\rceil,\lfloor\frac{a}{2}\rfloor\right]$.
The three intervals are of size within
1 of each other, and the desired middle interval 
$\left[-\lfloor \frac{a}{6}\rfloor,\lfloor\frac{a}{6}\rfloor\right]$ has
 $2 \lfloor\frac{a}{6}\rfloor + 1 \ge \lfloor\frac{a}{3}\rfloor \ge d$
many integers, 
so the sequence $0,d,2d,\ldots,a'd$
visits the middle interval at least once on every cycle mod $a$ 
(separating the cycles by when the sequence passes $0 \bmod{a}$, which is contained in
the middle interval)
and visits the other two intervals at most one more time than -- hence at most twice as many times than -- the middle interval
on each cycle. 
Furthermore, the sequence starts at 0 in the middle interval then progresses over the
right half of the middle interval, so, even if the sequence ends before it returns to
the middle interval (that is, if $a'd < a-\lfloor\frac{a}{6}\rfloor$), 
it still includes at least half of the points in
the middle interval that it would have included had we extended the sequence until it
completed a cycle.
Therefore the fraction of $j \in [0,a']$ satisfying \eqref{eq:dja} is at least 
$\frac{1}{3} \cdot \frac{1}{2} \cdot \frac{1}{2} = \frac{1}{12}$.

Otherwise, $d \geq \frac{a}{3}$. Since $d \leq \frac{a}{2}$ as well, there is some
$\epsilon$ with $|\epsilon| \leq \frac{1}{2} \cdot \left(\frac{a}{2} - \frac{a}{3}\right) = \frac{a}{12}$ such that either $d = \frac{a}{3} + \epsilon$
or $d = \frac{a}{2} + \epsilon$. Now, considering every third term or every second term
of the sequence $0,d,2d,\ldots,a'd \bmod{a}$ gives a subsequence equivalent to
$0,3\epsilon,2(3\epsilon),\ldots \bmod{a}$ or
$0,2\epsilon,2(2\epsilon),\ldots \bmod{a}$, respectively (note that $3\epsilon$ and 
$2\epsilon$ are indeed integers in their respective cases).
Since $3|\epsilon|$ and $2|\epsilon|$ in respective cases are $ \leq 3\cdot \frac{a}{12} < \frac{a}{3}$, the reasoning in the case when
$d < \frac{a}{3}$  applies with $3\epsilon$ or $2\epsilon$ in place of $d$ (our target interval is symmetric about 0, and so we may replace
$\epsilon$ by $|\epsilon|$ in this analysis): at least $\frac{1}{12}$ of
the terms in the subsequence satisfy \eqref{eq:dja}, so the proportion of $j \in [0,a']$
satisfying \eqref{eq:dja} is at least $\frac{1}{3} \cdot \frac{1}{12} = \frac{1}{36}$.
Thus the proportion of $(v,w) \in G$ satisfying \eqref{eq:triangle} for every $\ell,k$
is at least $\frac{1}{36}$.

So suppose that $(v,w) \in G$ satisfy \eqref{eq:triangle}. Then every cosine in 
\eqref{eq:pvlb_dl} is nonnegative. Then applying the variance upper bound 
\eqref{eq:variance_dl} in \eqref{eq:pvlb_dl} gives
\begin{align*}
    &\ex[p(v,w,g^{u^*})] \\
    &\geq \frac{1}{2^{2n}P} + \frac{2}{2^{2n}P^2}
    \exp\left(-(n-b+1) \left(\frac{2 \pi \epsilon}{2^b}\right)^2\right)
    \sum_{\ell=0}^{P-3} \sum_{k=1}^{P-2-\ell}
    \cos\left(2\pi \left[\frac{A_v}{P-1} k + B_{\ell+k} - B_\ell \right]\right).
    \numberthis\label{eq:pvlb_dl2}
\end{align*}
Assume that $A_v \geq 0$; the proof for $A_v \leq 0$ is symmetric. 
By \eqref{eq:triangle}, each 
\[
    \cos\left(2\pi \left[\frac{A_v}{P-1} k + B_{\ell+k} - B_\ell \right]\right)
    \geq \cos\left(2\pi \left[\frac{A_v}{P-1} k + \frac{1}{12}\right]\right).
\]
Thus the double sum in \eqref{eq:pvlb_dl2} is lower-bounded by 
\[
    S := \sum_{\ell=0}^{P-3} \sum_{k=1}^{P-2-\ell}
    \cos\left(2\pi \left[\frac{A_v}{P-1} k + \frac{1}{12}\right]\right)
    = \sum_{k=1}^{P-1} (P-1-k) 
    \cos\left(2\pi \left[\frac{A_v}{P-1} k + \frac{1}{12}\right]\right).
\]
Now the situation is similar to the factoring analysis, and we apply an analogous integral
approximation to $S$, with $P-1$ in place of $K$. 
As in \eqref{eq:int_approx}, the error of the integral approximation is
$O(P)$. Evaluating the integral, we find
\begin{align*}
    \frac{(P-1)^2}{2} &\geq
    \int_{0}^{P-1} (P-1-x) 
    \cos\left(2\pi \left[\frac{A_v}{P-1} x + \frac{1}{12}\right]\right) dx \\
    &\geq \int_{0}^{(P-1)/2} \frac{P-1}{2} 
    \cos\left(\frac{\pi}{3}\right) dx \\
    &= \Omega(P^2),
\end{align*}
as, by \eqref{eq:triangle}, 
\[
    2\pi \left[\frac{A_v}{P-1} x + \frac{1}{12}\right] \leq 
    2\pi \left[\frac{1}{6} \cdot \frac{1}{2} + \frac{1}{12}\right] = \frac{\pi}{3}
\]
for $x \leq \frac{P-1}{2}$, so every cosine term is at least $1/2$.
Therefore $S = \Omega(P^2)$, so \eqref{eq:pvlb_dl2} gives
\begin{align*}
    \ex[p(v,w,g^{u^*})] 
    &= \Omega\left(\frac{1}{2^{2n}P} + \frac{2}{2^{2n}P^2}
    \exp\left(-2(n-b+1) \left(\frac{2\pi\epsilon}{2^b}\right)^2\right) P^2\right) \\
    &= \Omega\left(\frac{1}{2^{2n}}\left[\frac{1}{P} + 2
    \exp\left(-2(n-b+1) \left(\frac{2\pi\epsilon}{2^b}\right)^2\right)\right]\right)
    \numberthis\label{eq:pvlb_dl3}
\end{align*}
for every $(v,w) \in G$ satisfying \eqref{eq:triangle}.
Under the low noise assumption 
$\frac{\epsilon}{2^b} = O\left(\left(\frac{\log n}{n}\right)^{1/2}\right)$,
there is a constant $\gamma > 0$ such that
\[
    \exp\left(-2(n-b+1) \left(\frac{2\pi\epsilon}{2^b}\right)^2\right) \geq \exp(-\gamma\log n) = n^{-\gamma}.
\]
Recall that the number of $(v,w) \in G$ satisfying \eqref{eq:triangle} is at least
$\frac{1}{36} |G|$, which, by \eqref{eq:size_g}, is $\Omega(2^n)$.
Thus by \eqref{eq:pvlb_dl3} and the assumption $2^{n-1} \leq P < 2^n$,
\begin{align*}
    \ex[p] &= \sum_{\substack{(v,w) \in G \\ u^* \in [P-1]}} \ex[p(v,w,g^{u^*})]  \\
      &\geq \sum_{\substack{(v,w) \in G \text{ satisfy } \eqref{eq:triangle} \\ u^* \in [P-1]}} \ex[p(v,w,g^{u^*})] \\
      &= \Omega\left(2^n(P-1) \cdot \frac{1}{2^{2n}}\left[\frac{1}{P} + 2n^{-\gamma}\right]\right) \\
      &= \Omega(n^{-\gamma}),
    \text{ ~~~if~~~ } \frac{\epsilon}{2^b} = O\left(\left(\frac{\log n}{n}\right)^{1/2}\right).
\end{align*}
Thus, the expected success probability is lower bounded by
a fixed inverse polynomial.


Next, we prove the upper bound on $\ex[p]$ in the high noise case. 
Assume $P$ is a Fouvry prime; by \autoref{thm:fouvry}, the
set of all such primes has positive density. Let
\[
    L := \{(k,\ell) \mid 0 \leq \ell \leq P-3, 1 \leq k \leq P-2-\ell\},
\]
and suppose that
$\varphi_{c,c'}(v)$ holds. For ease of notation, let $D_{k,\ell} :=
2\pi \left[\frac{A_v}{P-1} k + B_{\ell+k} - B_\ell \right]$ denote
the cosine argument in \eqref{eq:pvlb_dl}.
Then, splitting the double summation in \eqref{eq:pvlb_dl}, upper bounding the absolute value of every
cosine factor by 1, and applying \eqref{eq:prk_dl} and \eqref{eq:varlb_dl},
\begin{align*}
    &\ex[p(v,w,g^{u^*})] \\
    &\leq \frac{1}{2^{2n}(P-1)} + \frac{2}{2^{2n}(P-1)^2} \Bigg(
    \sum_{\substack{(k,\ell) \in L\\|J_{v,k,k'}| \geq cn}}
    \exp\left(-\frac{1}{2}\var[X_{k,\ell}]\right)
    + \sum_{\substack{(k,\ell) \in L\\ |J_{v,ell+k, \ell}| < cn}}
    \exp\left(-\frac{1}{2}\var[X_{k,\ell}]\right) \Bigg) \\
    &\leq \frac{1}{2^{2n}(P-1)} + \frac{2}{2^{2n}(P-1)^2} \left(
    |L| \cdot \exp\left(-\frac{c}{2}\left(\frac{2\pi\epsilon}{2^b}\right)^2 n\right)
    + P^22^{-c'n} 
\right).
\end{align*}
Using the fact that $|L| < P^2$ then gives
\begin{equation}
    \ex[p(v,w,g^{u^*})] 
    = O\left(\frac{1}{2^{2n}}\left[\frac{1}{P} + \exp\left(-\frac{c}{2}\left(\frac{2\pi\epsilon}{2^b}\right)^2 n\right)\right] 
    + 2^{-c'n}\right)
    \label{eq:pvub2_dl}
\end{equation}
if $\varphi_{c,c'}(v)$ holds.
Now, under the assumption that 
$\frac{\epsilon}{2^b} \Big/ \left(\frac{\log n}{n}\right)^{1/2} \to \infty$, we have
\[
    \exp\left(-\frac{c}{2} \left(\frac{2\pi\epsilon}{2^b}\right)^2 n\right)
    = o(1/\poly(n)).
\]
Substituting into \eqref{eq:pvub2_dl}, if $v \in \pi_1(G)$ satisfies $\varphi_{c,c'}(v)$, then
\[
    \ex[p(v,w,g^{u^*})]
    = O\left(\frac{1}{2^{2n}}\left(\frac{1}{P} + 2^{-c'n}\right)\right) 
    + o\left(\frac{1}{2^{2n}}\frac{1}{\poly(n)}\right).
\]
Furthermore, \eqref{eq:sumnorm_dl} gives
$\ex[p(v,w,g^{u^*})] \leq \frac{1}{2^{2n}(P-1)^2} (P-1)^2 = \frac{1}{2^{2n}}$ for any $v,w,u^*$,
and, by \eqref{eq:size_g}, $|\pi_1(G)| = |G| \leq 2^n$.
By \autoref{lem:prk_dl}, there is a $c''$ such that 
$\pr_{v \sim \pi_1(G)}[\varphi_{c,c'}(v)] \geq 1-2^{-c''n}$. Therefore
\begin{align*}
    \ex[p] &= \sum_{\substack{(v,w) \in G:~ \varphi_{c,c'}(v) \\ u^* \in [P-1]}} \ex[p(v,w,g^{u^*})]
    + \sum_{\substack{(v,w) \in G:~ \neg \varphi_{c,c'}(v) \\ u^* \in [P-1]}} \ex[p(v,w,g^{u^*})] \\
      & \leq |G|(P-1)\left[O\left(\frac{1}{2^{2n}}\left(\frac{1}{P} + 2^{-c'n}\right)\right) 
    + o\left(\frac{1}{2^{2n}}\frac{1}{\poly(n)}\right)\right]
    + 2^{-c''n}|G|(P-1) \frac{1}{2^{2n}} \\
      &= O\left(\frac{1}{2^n} + 2^{-c'n} + 2^{-c''n}\right) + o\left(\frac{1}{\poly(n)}\right)\\
      &= o\left(\frac{1}{\poly(n)}\right),
      \text{ ~~~if~~~ } \frac{\epsilon}{2^b} \Big/ \left(\frac{\log n}{n}\right)^{1/2} \to \infty.
\end{align*}
This is superpolynomially small.
\end{proof}

\begin{remark} \label{rem:qec}
Finally, we briefly discuss the effect of Quantum Error Correction.
The results in this paper can be viewed in a practical light as giving
a necessary bound to the error tolerance any Quantum Error Correction must provide
in order to, say, factor large integers. It is true that
the Threshold Theorem~\cite{aharonov_fault_2008, knill_resilient_1998} implies that if one can use a large number of
physical qubits to represent a single qubit, then one can correct (essentially) independent noise (such as those modeled in this paper) if one can keep 
physical noise level below a certain constant. However, one must also
realize that this theorem is proved under the  assumption that a system with
a large number of qubits  is  modeled, with infinite accuracy, as a tensor
product space. This latter assumption is a mathematical idealization for
which experimental evidence is lacking \cite{alicki_internal_2006,alicki_quantum_2006}. The first author 
believes (as stated in~\cite{cai}) that the exact SU(2) group description
of permissible single qubit rotations is only a mathematical model,
and that rotations of angles  below some sufficiently small threshold 
lose physical meaning. By the same token he does not believe
the tensor product model for a large number of
physical qubits is infinitely accurate.
Thus, he does not believe the difficulty discussed in this paper
is solved by the Threshold Theorem, and this  is not merely
a matter of engineering implementation.
\end{remark}

\section*{Acknowledgements}

Ben Young is supported by the European Union (CountHom, 101077083). Views and opinions expressed are however those of the author(s) only and do not necessarily reflect those of the European Union or the European Research Council Executive Agency. Neither the European Union nor the granting authority can be held responsible for them. Ben Young also acknowledges the support of
a Cisco Fellowship at the University of Wisconsin-Madison.

\printbibliography
\end{document}